\pgfplotsset{compat=1.18} 
\tikzset{>=latex}
\DeclareMathOperator*{\argmax}{arg\,max}
\DeclareMathOperator*{\argmin}{arg\,min}
\providecommand*{\T}[1]{\mathrm{#1}}  
\newcommand{\set}[1]{\mathcal{#1}}
\newtheorem{theorem}{Theorem}
\newtheorem{lemma}{Lemma}
\newtheorem{definition}{Definition}
\newtheorem{proposition}{Proposition}
\newtheorem{assumption}{Assumption}
\newtheorem{corollary}{Corollary}
\begin{document}
% FOR AVOIDING DASH FOR RECURRENT AUTHORS
\bstctlcite{IEEEexample:BSTcontrol}

\title{On the Role of Age and Semantics of Information in Remote Estimation of Markov Sources}
\author{Jiping~Luo,~Nikolaos~Pappas,~\IEEEmembership{Senior~Member,~IEEE}
\thanks{This work has been supported in part by ELLIIT, the Graduate School in Computer Science (CUGS), the European Union (ROBUST-6G, 101139068, and 6G-LEADER, 101192080), and the European Union's Horizon Europe research and innovation programme under the Marie Skłodowska-Curie Grant Agreement No 101131481 (SOVEREIGN). \textit{(Corresponding author: Jiping~Luo.)}}
\thanks{The authors are with the Department of Computer and Information Science, Link\"oping University, 58183 Link\"oping, Sweden (e-mail: jiping.luo@liu.se; nikolaos.pappas@liu.se).}
}

\maketitle
\begin{abstract}
    This paper studies semantics-aware remote estimation of Markov sources. We leverage two complementary information attributes: the urgency of lasting impact, which quantifies the \emph{significance} of consecutive estimation error at the transmitter, and the age of information (AoI), which captures the \emph{predictability} of outdated information at the receiver. The objective is to minimize the long-run average lasting impact subject to a transmission frequency constraint. The problem is formulated as a constrained Markov decision process (CMDP) with potentially unbounded costs. We show the existence of an optimal \emph{simple mixture policy}, which randomizes between two neighboring \emph{switching policies} at a common regeneration state. A closed-form expression for the optimal mixture coefficient is derived. Each switching policy triggers transmission only when the error holding time exceeds a threshold that depends on both the instantaneous estimation error and the AoI. We further derive sufficient conditions under which the thresholds are independent of the instantaneous error and the AoI. Finally, we propose a structure-aware algorithm, Insec-SPI, that computes the optimal policy with reduced computation overhead. Numerical results demonstrate that incorporating both the age and semantics of information significantly improves estimation performance compared to using either attribute alone.
\end{abstract}
\begin{IEEEkeywords}
Age and semantics of information, constrained Markov decision process, maximum a posteriori estimator.
\end{IEEEkeywords}

\section{Introduction}
\IEEEPARstart{A}{chieving} desired estimation quality under constrained communication resources is crucial in many networked control systems (NCSs)~\cite{Brockett-TAC-1997, Schenato-ProcIEEE-2007, WNCSSurvey-2018}. In classical remote estimation, estimation quality is equated with \emph{accuracy}, and the primary objective is to minimize distortion between the source and the reconstructed signal~\cite {Lipsa-TAC-2011-scalar-source, lingshi2013TAC, Aditya-TAC-2017, WuShuang-TCNS-2020}. In emerging cyber-physical systems, however, the timeliness and contextual relevance of information often outweigh mere signal fidelity. This motivates the semantics-aware estimation, where the focus shifts from minimizing distortion alone to ensuring that the information conveyed is \emph{fresh}, \emph{significant}, and aligned with the application context and goals~\cite{nikos2021CM, luo2025information}.

This paper studies the remote estimation of a finite state Markov chain $\{X_t\}_{t\geq 1}$, as depicted in Figure~\ref{fig:system_model}. A sensor, characterized by a transmission policy $\pi$, decides when to send source updates to a remote receiver, which operates under a maximum a posteriori (MAP) estimator $g$. The wireless channel is unreliable and subject to a transmission frequency constraint. Consequently, there might be a discrepancy between the source $X_t$ and the reconstructed signal $\hat{X}_t$. 

\begin{figure}[t]
    \centering
    \scalebox{0.83}{\begin{tikzpicture}[scale=1.0]
    % Nodes
    \node [draw, rounded corners=1.5pt, rectangle, minimum width=1.5cm, minimum height=0.8cm] (source) {Source};
    \node [draw, rounded corners=1.5pt, rectangle, right=0.9cm of source, minimum width=1.5cm, minimum height=0.8cm] (sensor) {Sensor};
    \node [draw, ellipse, minimum width=0.8cm, minimum height=0.8cm, right=0.8cm of sensor] (channel) {Channel};
    \node [draw, rounded corners=1.5pt, rectangle, right=1.cm of channel, minimum width=1.5cm, minimum height=0.8cm] (receiver) {Receiver};
    \node [right=0.8cm of receiver] (end) {};
    
    % Arrows
    \draw[->] (source) -- node[above] {$X_t$} (sensor);
    \draw[->] (sensor) -- node[above] {} (channel);
    \draw[->] (channel) -- node[above] {$Y_t$} coordinate[midway] (mid-cd) (receiver);
    \draw[->] (receiver) -- node[above] {$\hat{X}_t$} (end);
    
    % Feedback link (new addition)
    \draw[->, dashed] (mid-cd) -- ++(0,-0.8) -| 
    node[pos=0.25, below] {} 
    ($(sensor.south west)!0.5!(sensor.south east)$);
    \node[above=0.cm of sensor] {$\pi_t$};
    \node[above=0.cm of receiver] {$g_t$};
\end{tikzpicture}}
    \caption{Remote state estimation of a Markov source.}
    \label{fig:system_model}
\end{figure}

We leverage the semantic attribute of the \emph{urgency of lasting impact} to quantify the information value so as to filter out less important measurements. Specifically, the severity of an error depends on both its contextual significance and its duration (see, e.g.,~\cite{assaad2020TON, luo2026exploiting, luo2025cost}). This attribute is critical in many applications. For instance, in autonomous driving~\cite{jiping2023TITS}, the cost of misdetecting a nearby obstacle may grow exponentially as the error persists. Let $\Delta_t$ denote the number of consecutive time slots during which the system has remained in an erroneous state $(X_t, \hat{X}_t)$. The lasting impact is measured by a collection of context-aware nonlinear functions $\rho_{X_t, \hat{X}_t}(\Delta_t)$, where $\rho_{i,j}$ are general non-decreasing functions that represent escalating penalties for prolonged erroneous states.

We aim to find a communication policy that minimizes the long-run average lasting impact of consecutive estimation errors, subject to a hard constraint on the transmission frequency. Our main contributions are as follows:

(1) We study the complementary roles of information age and information semantics in remote estimation systems. The MAP estimator is characterized by both the age and the content of the most recently received measurement. By leveraging the urgency of lasting impact and the \emph{age} of information (AoI), we quantify both the significance of estimation errors at the transmitter and the usefulness of aged information at the receiver, thereby avoiding unnecessary transmissions for predictable or low-impact errors.

(2) We characterize the structure of the optimal policy. It admits a simple mixture structure that randomizes between two neighboring switching policies with a fixed probability. Each switching policy triggers transmission only when the error duration exceeds a threshold that depends on both the instantaneous estimation error and the AoI. We also provide sufficient conditions under which the thresholds are independent of the estimation error and the AoI. Moreover, the two neighboring policies differ in exactly one threshold, and we derive a closed-form expression for the mixture coefficient.

(3) We leverage these structural results to develop an efficient algorithm, Insec-SPI, which solves the constrained optimization problem in significantly fewer iterations than existing unstructured algorithms. Numerical results validate our theoretical findings and demonstrate that incorporating both the age and semantics of information yields significant improvements in estimation performance.

The rest of the paper is organized as follows. Section~\ref{sec:related-works} reviews related work. Section~\ref{sec:model} introduces the system model and problem formulation. Section~\ref{sec:main-results} presents the main results on the optimal solution. Section~\ref{sec:computation} presents the structure-aware Insec-SPI algorithm. The numerical results and the conclusion are provided in Section~\ref{sec:simulations} and Section~\ref{sec:conclusion}.

\section{Related Work}\label{sec:related-works}
In many real-time status update systems, information holds the greatest value when it is fresh. The notion of AoI measures the freshness of information that a receiver has about a source~\cite{Roy2012INFOCOM, kosta2017age, RoyYates2021JSAC}. Recent studies have demonstrated the role of AoI in remote estimation systems~\cite{WuShuang-TCNS-2020, Telatar2021ITW, Melih2021TON, Jayanth2023WiOpt, vishrant2025TMC, jiping2025wearing}. For example, in linear Gaussian source estimation, the error covariance is a monotonic function of AoI~\cite{WuShuang-TCNS-2020, jiping2025wearing}. However, AoI ignores the application context and has been considered an inefficient metric for monitoring general sources such as Wiener processes~\cite{SunYin-TIT-2020}, Ornstein-Uhlenbeck processes~\cite{SunYin-TON-2021}, and Markov chains~\cite{george2019GCWkshps, assaad2020TON, luo2025semantic}. As will be shown in this paper, however, this holds only for zero-order hold (ZOH) estimators that retain the latest received measurement as their estimate. For the MAP estimator, AoI remains relevant because it quantifies the usefulness of outdated information at the receiver~\cite{ayan2025age, cosandal2025joint}.

Beyond accuracy and freshness, semantic communication transmits the most significant piece of information, thereby reducing the amount of less important data transmitted in the network~\cite{nikos2021CM, luo2025information}. A fundamental problem is to define key semantic attributes that guide the generation, transmission, and reconstruction of information to achieve specific goals or tasks. Some related works are discussed below:

Information quality depends on both its age and content. Metrics such as content-aware AoI~\cite{george2019GCWkshps, ErfanTGCN24}, uncertainty of information~\cite{gongpu2022UoI, chen2024index}, state estimation entropy~\cite{Andrea2023JSAIT}, and age of channel state information~\cite{ayan2025age, costa2015age} reveal that information quality evolves with AoI at different rates depending on the content. Content-aware distortion metrics (see, e.g.,~\cite{nikos2021ICAS, mehrdad2024TCOM, Mehrdad-JCN-2023, luo2024goal, luo2025semantic, Zakeri-Asilomar-2023, Niu2020TWC, SunYin2023MILCOM}) account for the fact that the cost of estimation error depends not only on the physical discrepancy but also on the contextual relevance and potential control risks to system performance. Structural results for resource-constrained multi-source systems are derived in~\cite{luo2025semantic} and~\cite{luo2024goal}, offering insights into multimodal scenarios.

The persistence cost of consecutive errors may have a significant impact on many NCSs, yet it is rarely considered in classical remote estimation literature. The age of incorrect information (AoII)~\cite{assaad2020TON, ChenYutao-TON-AoII-2024, Tony2025TON, Nail-ISIT-2024, cosandal2025joint, Andrea2024distributed} and the cost of memory error~\cite{mehrdad2024TCOM} penalize the system through cost functions that scale with the time elapsed since the system was last synced. However, most existing studies adopt a context-agnostic form, which may be insufficient for applications where different estimation errors have fundamentally distinct implications and should be treated separately. The age of missed alarm (AoMA) and the age of false alarm (AoFA)~\cite{luo2024minimizing, luo2026exploiting} are context-aware metrics that quantify the lasting impacts of missed and false alarms of a binary Markov chain. Similar ideas have been applied to NCS stability~\cite{kriouile2025semantics}. Moreover, the severity of an estimation error depends on both its duration and context. The significance-aware age of consecutive error (AoCE)~\cite{luo2025cost} thus measures the urgency of lasting impact for general Markov chains through a set of context-aware nonlinear functions. Analytical tools from Markov decision processes (MDPs) play a central role in this line of research. It has been shown that the optimal transmission policy under the ZOH estimator exhibits a switching structure that depends \emph{only} on the instantaneous estimation error~\cite{luo2026exploiting, luo2025cost}. 

This work generalizes~\cite{luo2025cost} in several ways. First, we incorporate both the age and the semantics of information to improve estimation performance. Specifically, we employ the MAP estimator, which characterizes how the usefulness of outdated information evolves with AoI. To the best of our knowledge, the only prior work using the MAP estimator in semantics-aware communication systems is~\cite{cosandal2025joint}, which proposed a learning-based unstructured algorithm to minimize AoII in a pull-based system. In contrast, we study a push-based system and derive structural results for the optimal policy.

Second, we generalize the structural results in~\cite{luo2025cost} and show that the thresholds of the optimal switching policy depend on \emph{both} the instantaneous estimation error and the AoI. We present sufficient conditions under which the ZOH estimator yields the same performance as the MAP estimator.

Third, we impose a hard constraint on the transmission frequency. The resulting problem is a constrained MDP (CMDP), which is computationally more challenging to solve than the unconstrained MDP studied in~\cite{luo2025cost}. We fully characterize the structure of the constrained optimal policy and derive closed-form expressions for its parameters. In~\cite{luo2025cost}, a structured policy iteration (SPI) algorithm was introduced to compute the optimal switching policy. In this work, we develop Insec-SPI, which solves the CMDP in significantly fewer iterations than existing unstructured methods.

\begin{figure}[t!]
    \centering
    \scalebox{0.9}{\begin{tikzpicture}
    % Draw the timeline
    \draw[-{Latex[length=5pt, width=4pt]}] (-1.5,0) -- (6.8,0);
    
    % Dashed vertical lines
    \draw[thick] (-1, -0.15) -- (-1, 0.15);
    \draw[thick] (5.5, -0.15) -- (5.5, 0.15);
    
    % Time labels
    \node[anchor=north] at (-1,-0.15) {$t$};
    \node[anchor=north] at (5.5,-0.15) {\small $t+1$};
    
    % Event points
    \filldraw (-0.6,0) circle (1.5pt);
    \filldraw (0.0,0) circle (1.5pt);
    \filldraw (0.8,0) circle (1.5pt);
    \filldraw (2,0) circle (1.5pt);
    \filldraw (4,0) circle (1.5pt);
    \filldraw (5,0) circle (1.5pt);
    \filldraw (6,0) circle (1.5pt);
    
    % Labels
    \node[anchor=south] at (-0.6,0) {$X_t$};
    \node[anchor=south] at (0.0,0) {$I_t$};
    \node[anchor=south] at (0.8,0) {$U_t$};
    \node[anchor=north] at (0.8,-0.05) {$\pi_t$};
    \node[anchor=south] at (2,0) {$(Z_t, \Theta_t)$};
    \node[anchor=south] at (4,0) {$(\hat{X}_t, \Delta_t)$};
    \node[anchor=north] at (4,-0.05) {$g_t$};
    \node[anchor=south] at (5,0) {$I^\prime_t$};
    \node[anchor=south] at (6.05,0) {$X_{t+1}$};
\end{tikzpicture}}
    \caption{Time ordering of the relevant variables.}
    \label{fig:timeline}
\end{figure}

\begin{figure*}[t!]
    \centering
    \subfloat[Positively correlated source, $M = (0.8, 0.2; 0.3, 0.7)$.]{
        \scalebox{0.8}{\begin{tikzpicture}
    \begin{axis}[
        axis lines = left,
        xlabel = {$\text{AoI}~\Theta_t$},
        ylabel = {$\text{Belief of}~X_t = 1$}, 
        % ylabel style={    
        %     anchor=center,
        %     % yshift=-1.5mm
        % },
        legend pos = north east,
        domain = 0:12,        
        samples=13,
        samples at={0,...,12},
        xmin = 0, xmax = 12.5,
        ymin = 0, ymax = 1.1,
        width=10cm,           
        height=6cm,     
        % Grid and styling
        grid = both,
        grid style={dashed,gray!30},  
        axis line style = {-Latex},   
        major grid style = {dashed,gray!30}, 
        % Tick label adjustments
        tick label style={font=\small},  
        xtick = {0,2,...,12},
        ytick = {0,0.2,...,1.0},        
    ]
    \pgfmathsetmacro{\p}{0.2}
    \pgfmathsetmacro{\q}{0.3}
    
    \addplot[black, line width=0.45mm, dashed] {\p/(\p+\q) + \q/(\p+\q)*(1-\p-\q)^\x};
    \addlegendentry{$Z_t = 1$}

    \addplot[black, line width=0.4mm] {\p/(\p+\q) - \p/(\p+\q)*(1-\p-\q)^\x};
    \addlegendentry{$Z_t = 0$}
    \end{axis}
\end{tikzpicture}}
        \label{fig:AoCI_monotone}
    }
    \hspace{1cm}
    \subfloat[Negatively correlated source, $M = (0.2 , 0.8 ; 0.7 , 0.3)$.]{
        \scalebox{0.8}{\begin{tikzpicture}
    \begin{axis}[
        axis lines = left,
        xlabel = {$\text{AoI}~\Theta_t$},
        ylabel = {$\text{Belief of}~X_t = 1$}, 
        % ylabel style={    
        %     anchor=center,
        %     % yshift=-1.5mm
        % },
        legend pos = north east,
        domain=0:12,
        samples=13,
        samples at={0,...,12},
        xmin = 0, xmax = 12.5,
        ymin = 0, ymax = 1.1,
        width=10cm,           
        height=6cm,     
        % Grid and styling
        grid = both,
        grid style={dashed,gray!30},  
        axis line style = {-Latex},   
        major grid style = {dashed,gray!30}, 
        % Tick label adjustments
        tick label style={font=\small},  
        xtick = {0,2,...,12},
        ytick = {0,0.2,...,1.0},        
    ]
    \pgfmathsetmacro{\p}{0.8}
    \pgfmathsetmacro{\q}{0.7}
    
    \addplot[black, line width=0.45mm, dashed] {\p/(\p+\q) + \q/(\p+\q)*(1-\p-\q)^\x};
    \addlegendentry{$Z_t = 1$}

    \addplot[black, line width=0.4mm] {\p/(\p+\q) - \p/(\p+\q)*(1-\p-\q)^\x};
    \addlegendentry{$Z_t = 0$}
    \end{axis}
\end{tikzpicture}}
        \label{fig:AoCI_nonmonotone}
    }
    \caption{Evolution of the belief value of a binary Markov chain. (a) The stationary distribution of the chain is $\mu = (0.6, 0.4)$. When the source was last observed in state $1$, i.e., $Z_t = 1$, the MAP estimate is $\hat{X}_t = 1$ if $0 \leq \Theta_t \leq 3$, and $\hat{X}_t = 0$ otherwise. When $Z_t = 0$, the MAP estimate remains $\hat{X}_t = Z_t = 0$ for all $\Theta_t \geq 0$. In (b), the MAP estimator enters a steady state when $Z_t = 0$ and $\Theta_t \geq 3$, or when $Z_t = 1$ and $\Theta_t \geq 4$.}
    \label{fig:AoCI}
\end{figure*}

\section{System Model and Problem Formulation}\label{sec:model}
\subsection{System Model}
Consider the remote estimation system depicted in Figure~\ref{fig:system_model}. The time ordering of the variables to be defined is illustrated in Figure~\ref{fig:timeline}. The stochastic process of interest is an irreducible finite-state Markov chain with alphabet $\mathbb{X}$ and transition matrix $M$, where $M_{i,j} = \Pr[X_{t+1} = j|X_t = i]$, $t\geq 1$, represents the probability of the source transitioning from state $i$ to state $j$ between two consecutive time slots. A chain is called symmetric if $M^\top = M$. The probability of being in state $j$ after $n$ consecutive time slots is given by the $(i,j)$-th entry of the $n$-th power of $M$, i.e., $\Pr[X_{t+n} = j|X_t = i] = M^n_{i,j}$. Given irreducibility, the limit $\lim_{n\to\infty}M^n$ exists and each row converges to the chain's stationary distribution $\mu$, where
\begin{equation*}
    \lim_{n\to\infty}M^n_{i,j} = \mu(j)\,\,\,\textrm{and}\,\,\, \mu M = \mu.
\end{equation*}

At each time $t$, the sensor selects an action $U_t\in\mathbb{U} = \{0, 1\}$, where it transmits a fresh measurement if $U_t = 1$, and remains silent otherwise. Let $H_t$ denote the channel state at time $t$, which follows an i.i.d. Bernoulli process with 
\begin{equation*}
    \Pr[H_t = 1] = p_s, ~\Pr[H_t = 0] = 1 - p_s = p_f.
\end{equation*}
Here, $H_t = 1$ indicates a successful transmission, while $H_t = 0$ denotes a transmission failure. The input alphabet of the channel is $\mathbb{X}$, and the output alphabet is $\mathbb{Y}= \mathbb{X}\cup\{\mathcal{E}\}$, where the symbol $\mathcal{E}$ denotes the event when no packet was received. The channel output $Y_t$ is given by
\begin{equation}
    Y_t = \begin{cases}
        X_t, &U_tH_t = 1,\\
        \mathcal{E}, &U_tH_t = 0.
    \end{cases}
\end{equation}

At each time $t$, the sensor has access to the current source state $X_t$ and one-step delayed channel feedback $Y_{t-1}$. Thus, the information available at the sensor up until time $t$ is 
\begin{equation*}
    I_t = (X_{1:t}, Y_{1:t-1}, U_{1:t-1}).
\end{equation*}
With this information, the sensor selects an action $U_t$ by a transmission rule $\pi_t : \mathbb{X}^{t}\times \mathbb{Y}^{t-1} \times \mathbb{U}^{t-1} \to \mathbb{U}$, so that
\begin{equation}
    U_t = \pi_t(I_t) = \pi_t(X_{1:t}, Y_{1:t-1}, U_{1:t-1}).\label{eq:trans-policy}
\end{equation}
The sequence $\pi = (\pi_1, \pi_2, \ldots)$ is called a \emph{transmission policy}. Let $\Pi$ denote the set of all admissible policies defined in~\eqref{eq:trans-policy}. Of particular interest are stationary policies, which select the same decision rule at all time steps. Stationary policies can be either \emph{deterministic}, where the action at each state is selected with certainty, or \emph{randomized}, where the action at each state is drawn according to a probability distribution.

The receiver employs the MAP estimation rule. At each time $t$ an estimate of $X_t$ is generated by the rule~\cite[Ch.~3.5]{krishnamurthy2016POMDP}
\begin{equation}
    \hat{X}_t = g_t(Y_{1:t}) = \argmax_{x\in \mathbb{X}}\Pr[X_t = x|Y_{1:t}].\label{eq:MAP-origin}
\end{equation}
The AoI $\Theta_t$ at the receiver is defined as the time elapsed since the newest update was generated, i.e.,
\begin{equation}
    \Theta_t := t - \max\{\tau \leq t: Y_\tau \neq \mathcal{E}\}. 
\end{equation}
Then, the receiver's belief of the source being in state $X_t = x$ can be written as a function of AoI as
\begin{equation}
    \Pr[X_t = x|Y_{1:t}] = \begin{cases}
        1, &\Theta_t = 0, Y_t = x,\\
        0, &\Theta_t = 0, Y_t \neq x,\\
        M^{\Theta_t}_{Z_t, x}, &\Theta_t \geq 1, Y_t = \mathcal{E},
    \end{cases}\label{eq:AoI}
\end{equation}
where $Z_t = X_{t - \Theta_t}$ is the latest update at the receiver. This implies that history need not be retained; instead, the receiver stores only the most recent update and its generation time. Figure~\ref{fig:AoCI} depicts the evolution of the belief $\Pr[X_t = 1|Y_{1:t}]$ as a function of $(Z_t, \Theta_t)$ for the cases of a positively correlated source and a negatively correlated one.

In the sequel, we express the MAP estimator as
\begin{equation}
    \hat{X}_t = g(Z_t, \Theta_t).\label{eq:MAP}
\end{equation}

\begin{definition}
    The MAP estimator is said to be in steady state if, given outdated information $X_{t-\Theta_t}$, the estimate $\hat{X}_t$ remains unchanged as the AoI $\Theta_t$ grows. 
\end{definition}

In the literature, AoI has been considered an inefficient metric for monitoring Markov chains~\cite{luo2025information}. However, this limitation applies only to ZOH estimators, i.e., $\hat{X}_t = Z_t$ for all $\Theta_t \geq 0$. For the MAP estimator, the usefulness of information at the receiver depends on both the age of information and its content. This dependence has been widely exploited in data scheduling over Markovian fading channels to facilitate analysis and computation (see, e.g.,~\cite{zhao2008myopic, liu2010indexability, ouyang2016downlink}), although these works did not explicitly interpret $\Theta_t$ as AoI until~\cite{costa2015age}.

\subsection{Performance Measure}
In classical estimation systems, the system's performance is measured by \textit{average distortion}. Given a bounded distortion function $d : \mathbb{X} \times \mathbb{X} \to [0, \infty)$, the average distortion of a policy $\pi$ is measured by
\begin{equation}
    \mathcal{J}_\textrm{classic}(\pi) := \limsup_{T\to\infty}\mathbb{E}^{\pi} \Bigg[
    \frac{1}{T}\sum_{t=1}^{T}d(X_t, \hat{X}_t)
    \Bigg].
\end{equation}
In semantics-aware systems, however, information accuracy is not the only concern. Let $\T{SoI}_t$ denote the semantic value conveyed by the current measurement $X_t$, defined as
\begin{equation}
    \T{SoI}_t := \eta_t(I^\prime_t) = \eta_t(X_{1:t}, Y_{1:t}, U_{1:t}),
\end{equation}
where $\eta_t: \mathbb{X}^{t} \times \mathbb{Y}^{t} \times \mathbb{U}^{t} \to \Gamma$ is an extraction function that quantifies data significance through the history of all system realizations prior to time $t+1$, and $\Gamma \subseteq [0, \infty)$ is a real-valued domain. The system's performance is measured by
\begin{equation}
    \mathcal{J}(\pi) := \limsup_{T\to\infty}\mathbb{E}^{\pi} 
    \Bigg[
    \frac{1}{T}\sum_{t=1}^{T}c(X_t, \hat{X}_t, \T{SoI}_t)
    \Bigg],\label{eq:semantics-aware average cost}
\end{equation}
where $\hat{X}_t = g(Z_t, \Theta_t)$, and $c: \mathbb{X} \times \mathbb{X} \times \Gamma \to [0, \infty)$ is a cost function, incorporating both the estimation error $(X_t, \hat{X}_t)$ and its semantic value $\T{SoI}_t$. Notably, unlike the distortion function $d(X_t, \hat{X}_t)$, which depends solely on instantaneous estimation error, the semantic attribute $\T{SoI}_t$ may be \emph{history-dependent}. 

We instantiate the semantic attribute as the \emph{urgency of lasting impact} in consecutive errors~\cite{luo2025cost}. That is, the longer an error persists, the more severe its consequences can become. The error holding time is given by
\begin{equation}
    \Delta_t \hspace{-0.2em}=\hspace{-0.2em} \begin{cases}
        \Delta_{t-1} \hspace{-0.2em}+\hspace{-0.2em} 1, &X_t\neq \hat{X}_t, (X_t, \hat{X}_t) \hspace{-0.2em}=\hspace{-0.2em} (X_{t-1}, \hat{X}_{t-1}),\\
        1, &X_t\neq \hat{X}_t, (X_t, \hat{X}_t) \hspace{-0.2em}\neq\hspace{-0.2em} (X_{t-1}, \hat{X}_{t-1}),\\
        0, &X_t = \hat{X}_t.
    \end{cases}
\end{equation}
The urgency of lasting impact is defined as
\begin{equation}
    c(X_t, \hat{X}_t, \Delta_t) = \rho_{X_t, \hat{X}_t}(\Delta_t),\label{eq:semantics-aware cost}
\end{equation}
where $\rho_{i,j}$ are non-decreasing, possibly unbounded functions that impose escalating penalties for prolonged errors. These functions are general and may be discontinuous or non-convex.

The semantic attribute~\eqref{eq:semantics-aware cost} captures the contextual severity of estimation errors at the transmitter, while the information age characterizes the predictability of outdated data at the receiver through the MAP estimator~\eqref{eq:MAP}. By jointly incorporating information age and semantics, the proposed framework enables more informed decisions in resource-constrained systems, \textit{avoiding unnecessary transmissions triggered by predictable or low-impact errors.} This highlights the complementary roles of age and semantics in remote estimation.

\subsection{Problem Formulation}
The goal is to minimize the estimation cost $\mathcal{J}(\pi)$ subject to a hard constraint on the transmission frequency $F(\pi)$, where
\begin{equation}
    F(\pi) := \limsup_{T\to\infty}\mathbb{E}^{\pi} 
    \Bigg[
    \frac{1}{T}\sum_{t=1}^{T} U_t
    \Bigg].
\end{equation}
This problem is formulated as
\begin{equation}
    \inf_{\pi\in\Pi}\mathcal{J}(\pi),\,\,\,\textrm{subject to}\,\,\, F(\pi) \leq F_{\max}, \label{problem:constrained-semantics-aware problem}
\end{equation}
where $\Pi$ is the set of all admissible policies, and $F_{\max} \in (0, 1)$ is the maximum allowed transmission frequency.

\begin{definition}
    A policy $\pi$ is feasible if it satisfies $\mathcal{J}(\pi)<\infty$ and $F(\pi)\leq F_{\max}$. Let $\Pi_\textrm{f} \subset \Pi$ denote the set of all feasible policies. A policy is said to be (constrained) optimal if it attains the minimum $\mathcal{J}^* = \inf_{\pi\in\Pi_{\textrm{f}}}\mathcal{J}(\pi)$.
\end{definition}

Problem~\eqref{problem:constrained-semantics-aware problem} is a CMDP. Due to intermittent transmissions and channel unreliability, the cost function~\eqref{eq:semantics-aware cost} can grow indefinitely. This violates the boundedness assumption typically imposed on distortion functions, making the semantics-aware estimation problem inherently more challenging than its classic counterpart. 

A formal approach to CMDPs is the Lagrange multiplier method. Let $\lambda\geq0$ be a multiplier, which can be interpreted as the communication cost associated with each transmission attempt. Let 
\begin{equation}
    \ell(I_t, U_t) = \mathbb{E}[c(X_t, \hat{X}_t, \Delta_t)|I_t, U_t] + \lambda U_t
\end{equation}
denote the estimation and communication costs of taking action $U_t$. Define the long-run average Lagrangian cost as
\begin{align}
     \mathcal{L}^\lambda(\pi) 
     &:= \limsup_{T\to\infty}\mathbb{E}^{\pi} 
    \Bigg[
    \frac{1}{T}\sum_{t=1}^{T}\ell(I_t, U_t)
    \Bigg]\notag\\
    & =\mathcal{J}(\pi) + \lambda F(\pi).\label{eq:lagrange cost}
\end{align}
The Lagrangian approach then seeks to solve the following unconstrained MDP
\begin{equation}
    \inf_{\pi \in \Pi} \mathcal{L}^\lambda(\pi).\label{problem:MDP}
\end{equation}

\begin{definition}
    For a given $\lambda$, a policy $\pi_\lambda \in \Pi$ is said to be $\lambda$-optimal if it attains the minimum in~\eqref{problem:MDP}. For notational simplicity, we shall write $\mathcal{J}^\lambda = \mathcal{J}(\pi_\lambda)$, $F^\lambda = F(\pi_\lambda)$ and $\mathcal{L}^\lambda = \mathcal{L}^\lambda(\pi_\lambda) = \mathcal{J}^\lambda + \lambda F^\lambda$.
\end{definition}

Note that a $\lambda$-optimal policy to Problem~\eqref{problem:MDP} need not be feasible or optimal for the constrained Problem~\eqref{problem:constrained-semantics-aware problem}. The following result (e.g., Beutler and Ross \cite{ross1985CMDP}) gives conditions for a $\lambda$-optimal policy to be optimal.
\begin{proposition}\label{proposition:ross}
A $\lambda$-optimal policy solves Problem~\eqref{problem:constrained-semantics-aware problem} if:
\begin{enumerate}
    \item[i.] $\lambda = 0$ and $\pi_0$ satisfies $\mathcal{J}^0 < \infty$ and $F^0 \leq F_{\max}$; or
    \item[ii.] $\lambda > 0$ and $\pi_\lambda$ satisfies $\mathcal{J}^\lambda < \infty$ and $F^\lambda = F_{\max}$.
\end{enumerate}
\end{proposition}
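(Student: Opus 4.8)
The plan is to prove this by the standard Lagrangian weak‑duality plus complementary‑slackness argument (the Beutler--Ross reasoning), using nothing about the CMDP beyond the definitions already introduced: the per‑step structure of $\mathcal{L}^\lambda$, the meaning of feasibility ($\mathcal{J}(\pi)<\infty$ and $F(\pi)\le F_{\max}$), and the meaning of $\lambda$‑optimality (attaining the infimum in Problem~\eqref{problem:MDP} over \emph{all} of $\Pi$). No structural or existence results are needed here; a $\lambda$‑optimal policy is assumed to exist, and the two cases are handled separately.

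For case (i) ($\lambda=0$), the key observation is that $\mathcal{L}^0(\pi)=\mathcal{J}(\pi)$ for every $\pi$, so a $0$‑optimal $\pi_0$ minimizes $\mathcal{J}$ over all of $\Pi$, giving $\mathcal{J}^0=\inf_{\pi\in\Pi}\mathcal{J}(\pi)\le\inf_{\pi\in\Pi_{\text{f}}}\mathcal{J}(\pi)=\mathcal{J}^*$ because $\Pi_{\text{f}}\subset\Pi$. The hypotheses $\mathcal{J}^0<\infty$ and $F^0\le F_{\max}$ say exactly that $\pi_0\in\Pi_{\text{f}}$, which forces $\mathcal{J}^*\le\mathcal{J}(\pi_0)=\mathcal{J}^0$; hence $\mathcal{J}(\pi_0)=\mathcal{J}^*$ and $\pi_0$ is constrained optimal. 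For case (ii) ($\lambda>0$), I would first note that $\mathcal{J}^\lambda<\infty$ together with $F^\lambda=F_{\max}\le F_{\max}$ shows $\pi_\lambda\in\Pi_{\text{f}}$. Then, for an arbitrary feasible $\pi\in\Pi_{\text{f}}$, $\lambda$‑optimality of $\pi_\lambda$ followed by $F(\pi)\le F_{\max}$ yields
\[ \mathcal{J}^\lambda+\lambda F^\lambda=\mathcal{L}^\lambda(\pi_\lambda)\le\mathcal{L}^\lambda(\pi)=\mathcal{J}(\pi)+\lambda F(\pi)\le\mathcal{J}(\pi)+\lambda F_{\max}. \]
Substituting the active‑constraint condition $F^\lambda=F_{\max}$ and cancelling the finite quantity $\lambda F_{\max}$ gives $\mathcal{J}^\lambda\le\mathcal{J}(\pi)$ for every $\pi\in\Pi_{\text{f}}$; combined with $\pi_\lambda\in\Pi_{\text{f}}$ this gives $\mathcal{J}(\pi_\lambda)=\inf_{\pi\in\Pi_{\text{f}}}\mathcal{J}(\pi)=\mathcal{J}^*$, so $\pi_\lambda$ is constrained optimal.

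There is essentially no analytic obstacle, since the statement is a soft duality argument; the only points needing care are (a) the cancellation in case (ii) must avoid an $\infty-\infty$ situation, which is precisely why the hypothesis $\mathcal{J}^\lambda<\infty$ (and the boundedness of $F$ and $F_{\max}$) is invoked, and (b) it must be emphasized that the \emph{active}-constraint condition $F^\lambda=F_{\max}$ is exactly what upgrades the weak‑duality inequality into optimality — without it, $\pi_\lambda$ could be strictly suboptimal for the constrained problem because it would leave slack in the transmission budget that a better feasible policy could exploit. The genuinely hard work (showing that a suitably structured $\lambda$‑optimal policy exists for this unbounded‑cost CMDP, and that some $\lambda$ makes the constraint active) is deferred to the later sections and is not part of this proposition.
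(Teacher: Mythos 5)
Your proof is correct, and it is essentially the standard Beutler--Ross weak-duality/complementary-slackness argument that the paper itself invokes by citation without reproducing (the proposition is stated with a pointer to \cite{ross1985CMDP} rather than proved in the appendices). Both cases are handled exactly as intended, including the care about finiteness needed for the cancellation of $\lambda F_{\max}$ in case (ii).
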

The first condition can be identified directly. In general, however, we shall be dealing with the challenging Condition (ii). The key conceptual difficulty lies in meeting the equality $F^\lambda = F_{\max}$, which necessitates careful calibration of $\lambda$.

The following monotonicity properties (e.g., Luo and Pappas~\cite[proposition~3]{luo2025semantic}) will be instrumental in characterizing and searching for an optimal policy. Intuitively, as the value of $\lambda$ (communication cost) increases, it would be wise to transmit less frequently (i.e., a smaller $F^\lambda$), which in turn degrades the estimation performance (i.e., a larger $\mathcal{J}^\lambda$).

\begin{proposition}\label{proposition:monotonicity}
    The function $F^\lambda$ ($\mathcal{J}^\lambda$) is piecewise constant and non-increasing (non-decreasing) in $\lambda$. The Lagrangian cost $\mathcal{L}^\lambda$ is continuous, piecewise-linear and concave.
\end{proposition}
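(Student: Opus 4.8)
The plan is to treat $\mathcal{L}^\lambda = \inf_{\pi\in\Pi}\big(\mathcal{J}(\pi)+\lambda F(\pi)\big)$ as a pointwise infimum, over the policy space, of a family of functions that are affine in $\lambda$, and then read off the claimed properties from that representation together with the finite-state structure of the underlying MDP~\eqref{problem:MDP}. First I would observe that for any \emph{fixed} admissible policy $\pi$, the map $\lambda\mapsto \mathcal{J}(\pi)+\lambda F(\pi)$ is affine in $\lambda$ with slope $F(\pi)\in[0,1]$ and intercept $\mathcal{J}(\pi)\ge 0$. Since $\mathcal{L}^\lambda$ is the infimum of this family over $\pi\in\Pi$, it is automatically concave in $\lambda$ (an infimum of affine functions is concave) and non-decreasing (each member is non-decreasing, as $F(\pi)\ge 0$). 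This already gives concavity; what remains is the \emph{piecewise-linear} refinement and the monotonicity statements for $F^\lambda$ and $\mathcal{J}^\lambda$.

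For the piecewise-linear claim, the key fact I would invoke is that the unconstrained average-cost MDP~\eqref{problem:MDP} admits an optimal \emph{stationary deterministic} policy, and that the relevant state space — built from the finite quantities $(X_t,\hat X_t,\Delta_t,\Theta_t,Z_t)$ — can be taken countable, with only finitely many distinct stationary deterministic policies being candidates within any bounded region of $\lambda$ (formally, one restricts attention to policies with bounded thresholds, which is justified because large $\lambda$ forces sparse transmissions and hence the recurrent part of the chain stays in a bounded set). Thus on any compact $\lambda$-interval, $\mathcal{L}^\lambda$ is the minimum of \emph{finitely many} affine functions, which is by definition PWLC; letting the interval grow exhausts $[0,\infty)$. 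At a value of $\lambda$ in the interior of a linearity piece, the minimizing deterministic policy $\pi_\lambda$ is (generically) unique, and on that whole piece $F^\lambda = F(\pi_\lambda)$ and $\mathcal{J}^\lambda=\mathcal{J}(\pi_\lambda)$ are constant — giving the ``piecewise constant'' assertion.

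To get monotonicity of $F^\lambda$ and $\mathcal{J}^\lambda$ across pieces, I would use an exchange/interleaving argument. Take $\lambda_1<\lambda_2$ with $\lambda_i$-optimal policies $\pi_1,\pi_2$. Optimality gives $\mathcal{J}(\pi_1)+\lambda_1 F(\pi_1)\le \mathcal{J}(\pi_2)+\lambda_1 F(\pi_2)$ and $\mathcal{J}(\pi_2)+\lambda_2 F(\pi_2)\le \mathcal{J}(\pi_1)+\lambda_2 F(\pi_1)$. Adding these and cancelling yields $(\lambda_2-\lambda_1)\big(F(\pi_1)-F(\pi_2)\big)\ge 0$, hence $F^{\lambda_1}=F(\pi_1)\ge F(\pi_2)=F^{\lambda_2}$, i.e. $F^\lambda$ is non-increasing. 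Substituting $F(\pi_1)\ge F(\pi_2)$ back into the first inequality gives $\mathcal{J}(\pi_1)\le \mathcal{J}(\pi_2)+\lambda_1\big(F(\pi_2)-F(\pi_1)\big)\le \mathcal{J}(\pi_2)$, so $\mathcal{J}^\lambda$ is non-decreasing. (The same inequalities also confirm that the slope of $\mathcal{L}^\lambda$, which equals $F^\lambda$ on each piece, is non-increasing — consistent with concavity.)

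The main obstacle I anticipate is the \emph{unbounded cost} issue flagged in the text: because $\rho(\Delta_t)$ may be unbounded, one must be careful that (a) $\mathcal{L}^\lambda$ is finite for every $\lambda\ge 0$ — which needs at least one policy with finite $\mathcal{J}$, e.g. ``always transmit,'' available whenever $F_{\max}$ permits, and more care when it does not; (b) the average-cost optimality equation has a solution and an optimal stationary deterministic policy exists under unbounded costs, which requires verifying a Lyapunov/stability condition (e.g., that under the candidate policies the AoCE-driven chain is positive recurrent with finite stationary cost); and (c) the reduction to \emph{finitely many} candidate policies on a compact $\lambda$-range genuinely holds, i.e., that large thresholds are suboptimal. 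I would handle these by a uniform-boundedness argument on the recurrent class under $\lambda$-optimal policies and by citing the standard vanishing-discount / Lyapunov-function machinery for average-cost MDPs; once finiteness and the finite-candidate reduction are in hand, the PWLC, piecewise-constant, and monotonicity conclusions follow from the elementary arguments above.
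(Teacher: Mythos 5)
Your concavity and monotonicity arguments are correct and are the standard ones: $\mathcal{L}^\lambda$ is a pointwise infimum over $\pi$ of functions affine in $\lambda$, hence concave, and the exchange argument (adding the two optimality inequalities for $\lambda_1<\lambda_2$, then substituting back) gives $F^\lambda$ non-increasing and $\mathcal{J}^\lambda$ non-decreasing. For context, the paper does not prove this proposition in the manuscript at all; it imports it from \cite[Proposition~3]{jiping2025TCOM}, and your steps reproduce the usual proof of those two parts faithfully.

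The genuine gap is in the piecewise-linear and piecewise-constant claims. Your reduction to a minimum of \emph{finitely many} affine functions on each compact $\lambda$-interval is precisely the point that needs proof: the state space $\mathbb{S}=\mathbb{X}^2\times\mathbb{N}^2$ is countably infinite, so there are infinitely many stationary deterministic (indeed infinitely many switching) policies, and an infimum of infinitely many affine functions is concave but need not be piecewise linear --- breakpoints could accumulate, in which case $F^\lambda$ would not be piecewise constant. The heuristic you offer to force finiteness, that large $\lambda$ makes transmissions sparse ``and hence the recurrent part of the chain stays in a bounded set,'' points the wrong way: raising $\lambda$ raises the optimal thresholds, and sparser transmissions \emph{enlarge}, not shrink, the recurrent range of $(\Theta_t,\Delta_t)$. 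To close the gap you would need either (a) a uniform bound on the $\lambda$-optimal thresholds over $\lambda\in[0,\lambda_{\max}]$ for each fixed $\lambda_{\max}$, e.g.\ via Assumption~\ref{assumption:existence} together with a comparison showing that beyond some $\bar\delta(\lambda_{\max})$ the expected persistence cost of remaining silent exceeds $\lambda_{\max}$ plus the post-sync continuation cost, so that only finitely many switching policies can be optimal on that interval; or (b) to state and prove the proposition for the truncated model ($\Theta_{\max},\Delta_{\max}<\infty$) on which Algorithm~\ref{alg:full} actually operates, where the policy class is genuinely finite and your ``minimum of finitely many affine functions'' argument applies verbatim. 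Apart from this, the proposal is sound.
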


\section{Structural Results}\label{sec:main-results}
This section presents the structural results for the optimal policy. We first establish the existence of a \emph{deterministic} $\lambda$-optimal policy for any given multiplier $\lambda$. Such a $\lambda$-optimal policy admits a \emph{switching structure} that triggers transmission only when the error duration exceeds a threshold that depends on both the AoI and the instantaneous error. We then identify conditions under which the thresholds are independent of the AoI, implying that the MAP and ZOH estimators coincide.

Next, we show the existence of an optimal multiplier $\lambda^*$ and a \emph{randomized} $\lambda^*$-optimal policy that solves the constrained problem. Notably, such an optimal policy has a \emph{simple mixture structure} that randomizes between two neighboring switching policies with a fixed probability, and we derive a closed-form expression for the optimal mixture coefficient. 

\subsection{Existence of an Optimal Policy}
Recall that the cost function~\eqref{eq:semantics-aware cost} is (possibly) unbounded. This naturally raises the question of feasibility for Problems~\eqref{problem:constrained-semantics-aware problem} and~\eqref{problem:MDP}. In the following, we first give conditions under which a $\lambda$-optimal policy exists for every given $\lambda$. Then we show the existence of a feasible $\lambda$-optimal policy. These results are embodied in Theorem~\ref{theorem:existence}.

\begin{assumption}\label{assumption:existence}
    The nonlinear functions $\rho_{i,j}$ satisfy
    \begin{equation}
        \lim_{\delta\to\infty}\frac{\rho_{i,j}(\delta+1)}{\rho_{i,j}(\delta)} < \frac{1}{M_{i,i}p_f}, \quad i\neq j,
    \end{equation}
    where $M_{i,i}p_f$ is the error persistence probability in state $i$.
\end{assumption}

\begin{theorem}\label{theorem:existence}
    Let Assumption~\ref{assumption:existence} hold. The following are true:
    \begin{enumerate}
        \item[i.] For any $\lambda \geq 0$, there exists a $\lambda$-optimal policy that achieves finite average cost, i.e., $\mathcal{J}^\lambda <\infty$ for all $\lambda$.
        \item[ii.] There exists some $\lambda\geq 0$ for which the corresponding $\lambda$-optimal policy is feasible, i.e., $F^\lambda \leq F_{\max}$.
    \end{enumerate}
\end{theorem}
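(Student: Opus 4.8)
The plan is to treat the two items separately, both times combining a carefully chosen reference policy with Assumption~\ref{assumption:existence}, and for item~(i) invoking standard machinery for average-cost MDPs with unbounded costs. For item~(i), I would first view Problem~\eqref{problem:MDP} as a countable-state MDP with state $s_t=(X_t,Z_t,\Theta_t,\Delta_t)$ (from which the MAP estimate $g(Z_t,\Theta_t)$ and the cost~\eqref{eq:semantics-aware cost} are determined), nonnegative one-step cost $l^\lambda(s,u)=\mathbb E[d(X,\hat X)\rho(\Delta)\mid s,u]+\lambda u$, and the transition kernel induced by $Q$, the Bernoulli channel, and $g$. The heart of the matter is to exhibit one policy of finite average cost, for which I would use $\pi_1$, the policy that always transmits ($U_t\equiv1$). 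Under $\pi_1$ the chain regenerates to a fully-synchronized state $s_0$ whenever the channel succeeds (probability $p_s$ every slot); during a run of failures, $\{\Delta_t=\delta\}$ forces $\delta$ consecutive channel failures \emph{and} the source to dwell in a single state $i$ over those slots, so $\Pr[\Delta_t=\delta]\le C(\bar q p_f)^{\delta}$ uniformly in $t$, where $\bar q=\max_i Q_{i,i}$. Assumption~\ref{assumption:existence} says exactly that $\limsup_{\delta}\rho(\delta+1)/\rho(\delta)<1/(\bar q p_f)$, so $\sum_\delta\rho(\delta)(\bar q p_f)^\delta<\infty$, whence $\sup_t\mathbb E^{\pi_1}[\rho(\Delta_t)]<\infty$; since $d$ is bounded and $F(\pi_1)=1$, $\mathcal L^\lambda(\pi_1)<\infty$ and a fortiori $\mathcal J(\pi_1)<\infty$.

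With this reference policy I would run a vanishing-discount argument (Sennott-type conditions for average-cost MDPs with unbounded costs). Each $\alpha$-discounted problem has finite value $V_\alpha$ and an optimal stationary policy because costs are nonnegative and $V_\alpha^{\pi_1}(s_0)\le(1-\alpha)^{-1}\sup_t\mathbb E^{\pi_1}[l^\lambda(s_t,1)]<\infty$. With $s_0$ as reference state, I would show the relative value $h_\alpha=V_\alpha-V_\alpha(s_0)$ is bounded below uniformly in $\alpha$—by arguing that a perfectly synchronized, fresh state is essentially value-minimizing, so $h_\alpha\ge -B$—and bounded above by $M(s):=$ the expected undiscounted cost that $\pi_1$ incurs before first returning $s$ to $s_0$; here $M(s)<\infty$ by the same tail estimate and $\sum_{s'}p(s'\mid s,u)M(s')<\infty$ since a single step raises $\Delta$ by at most one. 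These conditions yield a stationary average-cost optimal policy whose cost equals $\lim_{\alpha\to1}(1-\alpha)V_\alpha(s_0)\le\mathcal J(\pi_1)<\infty$; hence $\mathcal J^\lambda\le\mathcal L^\lambda<\infty$ for every $\lambda\ge0$, which is item~(i).

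For item~(ii), since item~(i) gives $\mathcal J^\lambda<\infty$ for all $\lambda$, it remains to produce one $\lambda$ with $F^\lambda\le F_{\max}$. I would use the family $\pi_N$ that transmits only when the AoCE exceeds $N$. The same dwelling/failure mechanism yields $\Pr[\Delta_t\ge N+k]\le C\bar q^{N}(\bar q p_f)^{k}$; splitting $\sum_\delta\rho(\delta)\Pr[\Delta_t=\delta]$ at $\delta=N$ and applying Assumption~\ref{assumption:existence} to the tail gives $\mathcal J(\pi_N)<\infty$, while $F(\pi_N)=\Pr[\Delta_t\ge N]\le C'\bar q^{N}\to0$ as $N\to\infty$. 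Pick $N$ with $F(\pi_N)<F_{\max}$. Then $\lambda F^\lambda\le\mathcal L^\lambda\le\mathcal L^\lambda(\pi_N)=\mathcal J(\pi_N)+\lambda F(\pi_N)$, so $F^\lambda\le\mathcal J(\pi_N)/\lambda+F(\pi_N)$; since $F^\lambda$ is non-increasing in $\lambda$ by Proposition~\ref{proposition:monotonicity}, $\inf_\lambda F^\lambda\le F(\pi_N)<F_{\max}$, and thus $F^\lambda\le F_{\max}$ for all sufficiently large $\lambda$.

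The step I expect to be the main obstacle is item~(i). Making the tail bound $\Pr[\Delta_t=\delta]\le C(\bar q p_f)^\delta$ rigorous requires a coupling that simultaneously controls the source dwelling in one state, the run of channel failures, and the behaviour of the deterministic MAP map $g(Z,\Theta)$, which may itself change during a failure run before reaching steady state. Equally delicate is the uniform lower bound $h_\alpha\ge-B$: formalizing that ``a synchronized, fresh state is value-minimizing'' needs a monotonicity/comparison argument for $V_\alpha$ that must handle the several distinct synchronized states. Once these two estimates are established, the remaining MDP-theoretic steps are routine.
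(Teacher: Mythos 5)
Your proposal is correct, and its core is the same as the paper's. For part (i) you do exactly what the paper does: exhibit a reference policy under which a given error can persist only with per-slot probability $Q_{x,x}p_f$, so the expected persistence cost becomes a series of the form $\sum_n (Q_{x,x}p_f)^n\rho(\delta_0+n)$, which converges by the ratio test under Assumption~\ref{assumption:existence} (compare \eqref{eq:persistence-prob}--\eqref{eq:persistence-cost}); the paper then simply concludes $\mathcal{L}^\lambda\le\mathcal{L}^\lambda(\pi)<\infty$ and leaves the existence of a stationary $\lambda$-optimal policy to the standard MDP references, whereas you additionally sketch the Sennott-type vanishing-discount verification — extra rigor, not a different idea. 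The step you flag as the main obstacle (the MAP map $g(Z,\Theta)$ possibly changing during a failure run) is not actually an obstacle: any change of $\hat{X}_t$ resets the AoCE, so it can only terminate the error run early and your geometric tail bound survives; the paper handles this by assuming the estimator is in steady state. For part (ii) your route differs in form: you build explicit AoCE-threshold policies $\pi_N$ with $F(\pi_N)\to 0$ and finite cost, and read off $F^\lambda\le \mathcal{J}(\pi_N)/\lambda+F(\pi_N)<F_{\max}$ for large $\lambda$, while the paper argues by contradiction against a randomized policy $\tilde{\pi}$ with $F(\tilde{\pi})=F_{\max}-\mu$. The mechanism is identical (a fixed low-frequency, finite-cost policy caps the linear growth of $\mathcal{L}^\lambda$ in $\lambda$), but your version has the merit of exhibiting a concrete comparison policy whose finite cost really does follow from Assumption~\ref{assumption:existence} — above the threshold it transmits every slot, so the tail ratio is again $Q_{x,x}p_f$ — whereas the paper's $\tilde{\pi}$ is unspecified and its finite cost, which the contradiction implicitly needs, is not verified.
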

\begin{IEEEproof}
    See Appendix~\ref{proof:theorem existence}.
\end{IEEEproof}

\subsection{Structure of \texorpdfstring{$\lambda$}{lambda}-optimal Policies}
We first show that the sensor can discard most of its historical information without losing optimality.
\begin{lemma}\label{lemma:information-state}
    There is no loss of optimality in restricting attention to transmission rules of the form
    \begin{equation}
        U_t = \pi_t(X_t, Z_{t-1}, \Theta_{t-1}, \Delta_{t-1}).
    \end{equation}
    The \emph{information state} (i.e., a sufficient statistic for decision-making) is given by
    \begin{equation}
        S_t := (X_t, Z_{t-1}, \Theta_{t-1}, \Delta_{t-1}).\label{eq:information-state}
    \end{equation}
\end{lemma}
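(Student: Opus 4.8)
The plan is to establish that $S_t = (X_t, Z_{t-1}, \Theta_{t-1}, \Delta_{t-1})$ is a controlled Markov state, i.e., that (a) the per-step cost $l(I_t, U_t)$ depends on the history only through $(S_t, U_t)$, and (b) the distribution of the next state $S_{t+1}$ given the history and the action depends only on $(S_t, U_t)$. Once both are shown, the standard sufficiency argument for MDPs (e.g., \cite[Ch.~2]{krishnamurthy2016POMDP}) immediately yields that there is no loss of optimality in restricting to Markov policies of the stated form for the unconstrained problem~\eqref{problem:MDP}; since Proposition~\ref{proposition:ross} reduces the constrained problem~\eqref{problem:constrained-semantics-aware problem} to a family of such unconstrained problems plus a mixture, the same reduction carries over to the CMDP. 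First I would make precise how the receiver-side quantities are reconstructible from $S_t$ together with $U_t$ and the current channel realization $H_t$: by \eqref{eq:AoI}, the belief $\Pr[X_t = \cdot \mid Y_{1:t}]$ and hence $\hat{X}_t = g(Z_t, \Theta_t)$ are determined by $(Z_t, \Theta_t)$, and $(Z_t, \Theta_t)$ in turn is a deterministic function of $(Z_{t-1}, \Theta_{t-1}, U_t, H_t, X_t)$ — namely $(Z_t, \Theta_t) = (X_t, 0)$ if $U_t H_t = 1$ and $(Z_t, \Theta_t) = (Z_{t-1}, \Theta_{t-1}+1)$ otherwise.

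For step (a), I would observe that $\hat{X}_{t-1} = g(Z_{t-1}, \Theta_{t-1})$, so the triple $(X_{t-1}, \hat{X}_{t-1}, \Delta_{t-1})$ needed to update $\Delta_t$ is available — wait, $X_{t-1}$ is not in $S_t$; the cleaner route is to note that $\Delta_t$ is itself a function of $(\Delta_{t-1}, X_t, \hat{X}_t, X_{t-1}, \hat{X}_{t-1})$, but in fact the AoCE recursion only needs to know whether the error \emph{pair} persisted, and since $\Delta_{t-1}\ge 1$ already encodes that $(X_{t-1},\hat X_{t-1})$ was a specific error and $\Delta_{t-1}=0$ encodes $X_{t-1}=\hat X_{t-1}$, the transmitter can evaluate $\mathbb{E}[\rho(\Delta_t)\mathds{1}\{X_t\neq\hat X_t\} \mid S_t, U_t]$ by averaging over $H_t$ only (the sole remaining randomness), each branch giving a deterministic $(Z_t,\Theta_t)$ hence a deterministic $\hat X_t$ and $\Delta_t$. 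Thus $l(I_t,U_t)=p_f\,c(X_t, g(Z_{t-1},\Theta_{t-1}+1),\Delta_t^{(0)}) + p_s U_t\, c(X_t, g(X_t,0), 0) + p_f' (\dots) + \lambda U_t$ with the age updates computed per branch — a function of $(S_t, U_t)$ alone. For step (b), the next information state is $S_{t+1} = (X_{t+1}, Z_t, \Theta_t, \Delta_t)$; given $(S_t, U_t)$, the pair $(Z_t, \Theta_t, \Delta_t)$ is a random function of $H_t$ only (independent of everything else by the i.i.d. channel assumption), and $X_{t+1}$ is drawn from $Q_{X_t, \cdot}$ independently of the past given $X_t$ by the Markov property of the source. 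Hence the conditional law of $S_{t+1}$ given $I_t, U_t$ equals its conditional law given $(S_t, U_t)$, which is the Markov property required.

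The main obstacle — really a bookkeeping subtlety rather than a deep difficulty — is the time-ordering shown in Figure~\ref{fig:timeline}: the action $U_t$ influences $\hat X_t$ and therefore $\Delta_t$ within the same slot, so one must be careful that the "state before action" $S_t$ does not already contain $\Theta_t$ or $\hat X_t$, only their one-step-delayed versions $Z_{t-1},\Theta_{t-1},\Delta_{t-1}$, together with the freshly observed $X_t$. I would verify explicitly that every object the transmitter needs in order to (i) compute its expected cost and (ii) predict $S_{t+1}$ is measurable with respect to $\sigma(S_t, U_t, H_t)$, and that $X_{1:t-1}, Y_{1:t-1}, U_{1:t-1}$ enter only through the summary $(Z_{t-1},\Theta_{t-1},\Delta_{t-1})$ — using \eqref{eq:AoI} to absorb the entire received history into $(Z_{t-1},\Theta_{t-1})$ and the AoCE recursion to absorb the error history into $\Delta_{t-1}$. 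A short induction on $t$ confirms that $(Z_{t-1},\Theta_{t-1},\Delta_{t-1})$ is indeed computable from the information available and, together with $X_t$, constitutes a sufficient statistic, completing the proof.
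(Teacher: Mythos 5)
Your overall route is the same as the paper's: you verify that $S_t=(X_t,Z_{t-1},\Theta_{t-1},\Delta_{t-1})$ is a controlled Markov state by checking that the expected per-step cost and the conditional law of $S_{t+1}$ given $(I_t,U_t)$ depend on the history only through $(S_t,U_t)$, which is exactly the verification of \eqref{eq:markov-property}--\eqref{eq:absorbing-property} in the paper; your explicit reconstruction of $(Z_t,\Theta_t)$ from $(Z_{t-1},\Theta_{t-1},X_t,U_t,H_t)$ and of $\hat X_t=g(Z_t,\Theta_t)$ via \eqref{eq:AoI}, and the use of the i.i.d.\ channel and the source Markov property for the kernel, are the same mechanisms the paper relies on.

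The one genuine soft spot is precisely the point you flagged and then waved away. The AoCE recursion compares the pair $(X_t,\hat X_t)$ with $(X_{t-1},\hat X_{t-1})$, and your claim that ``$\Delta_{t-1}\ge 1$ already encodes that $(X_{t-1},\hat X_{t-1})$ was a specific error'' does not carry the load you put on it: $\hat X_{t-1}=g(Z_{t-1},\Theta_{t-1})$ is recoverable from $S_t$, but $X_{t-1}$ is not. For $|\mathbb{X}|\ge 3$, $\Delta_{t-1}\ge 1$ only tells you $X_{t-1}\neq g(Z_{t-1},\Theta_{t-1})$, so in the branch $X_t\neq\hat X_t=\hat X_{t-1}$ the event $\{X_t=X_{t-1}\}$ --- hence whether the verbatim recursion gives $\Delta_t=\Delta_{t-1}+1$ or $\Delta_t=1$ --- is not determined by $(S_t,U_t,H_t)$, and your claimed branch-wise determinism of $\Delta_t$ (and thus of the per-step cost) fails as argued. (For a binary alphabet your sentence is literally true, since $\Delta_{t-1}\ge1$ together with $\hat X_{t-1}$ pins down $X_{t-1}$.) What closes the step, and what the paper actually uses both in the factor $\Pr[\Delta_t\mid\Delta_{t-1},U_t]$ of its proof and in the case analysis preceding \eqref{eq:Q_0}, is that the continuation of the error counter is decided by comparing the new estimate with $x$ and with the old estimate: synced if $g(z,\theta+1)=x$ (or upon successful delivery), same error if $g(z,\theta+1)=g(z,\theta)\neq x$, new error otherwise --- all quantities measurable with respect to $(S_t,U_t,H_t)$, so no knowledge of $X_{t-1}$ is required. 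You should either state the AoCE update in this operational form before running your branch argument, or restrict the ``encoding'' claim to $|\mathbb{X}|=2$; as written, that sentence is the step on which the whole sufficiency claim rests, and it does not hold for general alphabets.
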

Let $\mathbb{S}=\mathbb{X}^2\times\mathbb{N}^2$ denote the state space of $\{S_t\}$. In the sequel, we shall write $U_t = \pi_t(S_t)$ and $\ell(I_t,U_t)=\ell(S_t,U_t)$.

\begin{IEEEproof}
    We need to show that $\{S_t\}$ is a controlled Markov process controlled by $\{U_t\}$. See Appendix~\ref{proof:lemma-information-state}.
\end{IEEEproof}

The next result is a direct consequence of Theorem~\ref{theorem:existence} and Lemma~\ref{lemma:information-state} (e.g., Puterman~\cite[Ch.~8.10]{puterman1994markov}). It implies that, for the unconstrained Problem~\eqref{problem:MDP}, there is no loss of optimality in restricting attention to stationary \emph{deterministic policies}, and the $\lambda$-optimal policy can be obtained by solving the Bellman equation recursively.

\begin{proposition}
    The $\lambda$-optimal policy $\pi_\lambda$ is stationary deterministic such that for every $s\in\mathbb{S}$, $u = \pi_\lambda(s)$ solves the following Bellman equation
    \begin{equation}
        \mathcal{L}^\lambda + V(s) = \min_{u}
        \big\{
         \ell(s, u) + \mathbb{E}[V(s^\prime) |s, u]
        \big\},\label{eq:bellman-equation}
    \end{equation}
    where $V(s)$ is a bounded function, and $s^\prime = (x^\prime, z^\prime, \theta^\prime, \delta^\prime)$ is the next state after taking action $u$. The Bellman equation~\eqref{eq:bellman-equation} can be solved using the recursion
    \begin{subequations}
        \begin{align}
    Q^n(s,u) &= \ell(s,u) + \mathbb{E}[V^{n-1}(s^\prime)|s,u],\label{eq:Q-factor-1}\\
    \tilde{V}^n(s) &= \min_{u}\{Q^n(s,u)\},\label{eq:Q-factor-2}\\
    V^n(s) &= \tilde{V}^n(s) - \tilde{V}^n(s_\textrm{ref}),\label{eq:Q-factor-3}
\end{align}\label{eq:Q-factor}
    \end{subequations}
where $Q^n(s,u)$ is the Q-factor at iteration $n$, and $s_\textrm{ref}\in\mathbb{S}$ is an arbitrary reference state. The sequences $\{Q^n(s, u)\}$, $\{\tilde{V}^n(s)\}$ and $\{V^n(s)\}$, converge as $n \to \infty$. Moreover, $V(s) = \tilde{V}(s) - \tilde{V}(s_\textrm{ref})$ and $\mathcal{L}^\lambda = \tilde{V}(s_\textrm{ref})$ are a solution to \eqref{eq:bellman-equation}.
\end{proposition}

So far, we have shown that history need not be retained and nonrandomized policies suffice for Problem~\eqref{problem:MDP}. This simplifies computation through reduced storage and fewer arithmetic operations. One can use classical dynamic programming methods~\cite{puterman1994markov} to solve the Bellman equation~\eqref{eq:bellman-equation}. However, they often entail substantial computational overhead.

In the next theorem, we show that it suffices to restrict attention to \emph{switching policies}. This structural result significantly reduces computational and memory complexity. To implement a switching policy, one only needs to compute and store $|\mathbb{X}|(|\mathbb{X}|-1)$ thresholds, rather than computing the optimal action for every state. We first introduce the following definitions.

\begin{definition}
    A policy $\pi$ is said to have a \emph{switching} structure if, given error $(x, z)$ and AoI $\theta$, $\pi(x, z, \theta, \delta) = 1$ only when the error holding time $\delta$ exceeds a threshold $\delta^*_{x, g(z, \theta)}$. Formally,
    \begin{equation}
        \pi(x, z, \theta, \delta) = \begin{cases}
            1, &x\neq g(z, \theta), \delta\geq \delta^*_{x, g(z, \theta)},\\
            0, &\textrm{otherwise}.
        \end{cases}\label{eq:switching-structure}
    \end{equation}
    A \emph{threshold} policy is a switching policy with identical thresholds across all errors, i.e., $\delta^*_{x, g(z, \theta)} = \delta^*$ for all $x\neq g(z, \theta)$.
\end{definition}

\begin{definition}\label{eq:poset}
   Let $(\mathbb{S}, \preceq)$ be a partially ordered set. For two elements $s_1, s_2 \in \mathbb{S}$, we define
    \begin{equation*}
        s_1 \preceq s_2 \iff (x_1, z_1, \theta_1) = (x_2, z_2, \theta_2)~\textrm{and}~\delta_1 \leq \delta_2.
    \end{equation*}
    If $(x_1, z_1, \theta_1) \neq (x_2, z_2, \theta_2)$, then $s_1$ and $s_2$ are not comparable. A function $f: \mathbb{S} \to \mathbb{R}$ is \emph{monotone} if
    \begin{equation*}
        s_1 \preceq s_2 \implies f(s_1) \leq f(s_2).
    \end{equation*}
\end{definition}

\begin{theorem}\label{theorem:lambda-optimal}
    Switching policies suffice for Problem~\eqref{problem:MDP}.
\end{theorem}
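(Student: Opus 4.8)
The plan is to prove Theorem~\ref{theorem:lambda-optimal} by showing that the $\lambda$-optimal value function $V$ solving the Bellman equation~\eqref{eq:bellman-equation} possesses a monotonicity (submodularity-type) property in the AoCE coordinate $\delta$, from which the switching structure of the greedy policy follows immediately. Concretely, I would first fix an error state $(x,z,\theta)$ with $x \neq g(z,\theta)$ and study the $Q$-factor difference $\Phi(s) \coloneqq Q_\lambda(s,1) - Q_\lambda(s,0)$, where $Q_\lambda(s,u) = l(s,u) + \mathbb{E}[V(s')\mid s,u]$. The greedy action is $\pi_\lambda(s) = 1$ iff $\Phi(s) \le 0$ (breaking ties toward $0$), so it suffices to show that along the chain $\delta_1 \preceq \delta_2$ (same $(x,z,\theta)$), $\Phi$ is non-increasing in $\delta$: once transmitting becomes preferable at some $\delta^*_{x,z,\theta}$, it stays preferable for all larger $\delta$. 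I would also observe that when $x = g(z,\theta)$ the cost $d(x,\hat X_t)\rho(\Delta_t)=0$ and $\Delta_t = 0$, and transmitting only adds the $\lambda$ penalty without changing the estimate's statistics in a beneficial way, so $\pi_\lambda = 0$ there — this gives the "otherwise'' branch of~\eqref{eq:switching-structure}.

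The core step is therefore establishing $\Phi(x,z,\theta,\delta+1) \le \Phi(x,z,\theta,\delta)$. I would do this via value iteration: define $V_0 \equiv 0$ and $V_{n+1}(s) = \min_u\{l(s,u) + \mathbb{E}[V_n(s')\mid s,u]\} - \text{const}$, which converges to $V$ by the results underlying Theorem~\ref{theorem:existence} and the standard theory in~\cite{puterman1994markov, sennott1998stochastic}. I would prove by induction on $n$ that each $V_n$ is non-decreasing in $\delta$ within a fixed $(x,z,\theta)$ block, and moreover that the $Q$-factor gap $\Phi_n$ is non-increasing in $\delta$. The action $u=1$ resets $\Delta_t$ to either $0$ or $1$ (depending on whether the fresh sample is delivered and whether it equals the source), so $\mathbb{E}[V_n(s')\mid s,1]$ is \emph{independent of $\delta$}; the action $u=0$ increments the age to $\Delta_t = \delta+1$ (when the error persists, which happens with probability governed by $Q_{x,x}p_f$ and the estimator map $g$) or resets it to $0$ (when the source or estimate moves out of the error). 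Thus $\Phi_n(x,z,\theta,\delta) = [\text{const in }\delta] + \lambda - \big(\mathbb{E}[V_n(s')\mid s,0] + \text{per-step cost}(s,0)\big)$, and monotonicity of $\Phi_n$ reduces to showing that the "value of staying silent'' term $\mathbb{E}[V_n(s')\mid s,0] + d(x,g(z,\theta))\rho(\delta)$ is non-decreasing and in fact grows at least as fast in $\delta$ as any competing term — this uses the non-decreasing age function $\rho$ and the induction hypothesis on $V_n$. Passing to the limit $n\to\infty$ preserves the weak inequalities, giving the claim for $V$.

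I expect the main obstacle to be handling the transition of the AoI/content pair $(Z_{t-1},\Theta_{t-1})$ cleanly inside the silent-action branch, because whether the error $(X_t,\hat X_t) = (X_{t-1},\hat X_{t-1})$ persists depends on the estimator map $g(z,\theta)$, which can be non-monotone in $\theta$ (as Figure~\ref{fig:AoCI} shows for negatively correlated sources). One must be careful that the event $\{$error persists, so $\Delta_t = \delta+1\}$ versus $\{$error breaks, so $\Delta_t = 0\}$ is determined by $(X_{t+1}, Z_t, \Theta_t)$ and \emph{not} by $\delta$ itself — so the conditional distribution of the "next-block label'' given silence is genuinely $\delta$-free, and only the $\Delta$-coordinate carries the $\delta$-dependence. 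Isolating this clean product structure (already hinted at in the proof of Lemma~\ref{lemma:information-state}, where $\Pr[S_{t+1}\mid S_t,U_t]$ factors as $\Pr[X_{t+1}\mid X_t]\Pr[Z_t,\Theta_t\mid \cdot]\Pr[\Delta_t\mid\Delta_{t-1},U_t]$) is what makes the induction go through; the rest is a routine monotone-comparison argument. A secondary technical point is justifying that value iteration converges and that the limiting $V$ is the bounded solution in~\eqref{eq:bellman-equation} despite the unbounded cost — but this is exactly what Assumption~\ref{assumption:existence} and Theorem~\ref{theorem:existence} are designed to deliver, so I would invoke them rather than reprove them.
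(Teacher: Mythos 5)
Your proposal follows essentially the same route as the paper: reduce the switching structure to a submodularity/monotone-difference property of the Q-factor in $\delta$, exploit the fact that (as in Lemma~\ref{lemma:information-state}) the transition law of the non-$\Delta$ coordinates is $\delta$-free so only the AoCE coordinate carries the $\delta$-dependence, and close the argument by an induction along value iteration showing $V$ is nondecreasing in $\delta$ (the paper's Lemma~\ref{lemma:monotonicity of V}), using the nondecreasing age function $\rho$ and Theorem~\ref{theorem:existence} for convergence. Your remark that the persistence-versus-reset event under silence is determined by $(X_{t+1},Z_t,\Theta_t)$ and not by $\delta$ is exactly the observation that makes the paper's case analysis of $Q(s,0)$ in~\eqref{eq:Q_0} go through despite the possible non-monotonicity of $g(z,\theta)$ in $\theta$.

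One intermediate claim is wrong as stated: $\mathbb{E}[V_n(s')\mid s,1]$ is \emph{not} independent of $\delta$, because the channel is unreliable. A transmission attempt resets the AoCE only upon successful delivery (probability $p_s$); with probability $p_f$ the attempt fails and the state evolves exactly as under silence, so the AoCE still grows to $\delta+1$ and the $\delta$-dependence persists in the transmit branch. The paper handles this by writing
\begin{align}
Q(s,1) = \lambda + p_f\,Q(s,0) + p_s\sum\nolimits_{x'}Q_{x,x'}V(x',x,0,0),\notag
\end{align}
so the submodularity gap becomes $-p_s\bigl(Q(s',0)-Q(s,0)\bigr)$ rather than $-(Q(s',0)-Q(s,0))$. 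Your reduction to the monotonicity of the silent-branch value survives this correction unchanged (the factor $p_s>0$ does not affect the sign), so the slip is repairable rather than fatal, but as written your expression for $\Phi_n$ implicitly assumes a perfect channel and should be amended accordingly.
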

\begin{IEEEproof}
    (Sketch) The key step is to show that the value function $V(s)$ is monotone. It then follows that the $\lambda$-optimal policy $\pi_\lambda(s)$ is also monotone and therefore has a switching structure. Detailed arguments are provided in Appendix~\ref{proof:theorem-lambda-optimal}.
\end{IEEEproof}

Theorem~\ref{theorem:lambda-optimal} does not imply that for a fixed $\lambda$, every $\lambda$-optimal policy necessarily has a switching structure. Rather, it guarantees the existence of a switching $\lambda$-optimal policy. As shown later in Theorem~\ref{theorem:optimal-policy}, a $\lambda$-optimal policy that solves the original constrained problem may be randomized.

The following corollaries present conditions under which the thresholds of the $\lambda$-optimal policy are independent of AoI or estimation error. Under these conditions, the ZOH estimator performs as well as the MAP estimator. Proofs are provided in Appendix~\ref{proof:coro symmetric source}.

\begin{corollary}\label{coro:symmetric-source}
    Assume $M$ is symmetric and satisfies
    \begin{equation}
        M_{i,j} = \begin{cases}
            \sigma, &i\neq j,\\
            1 - (|\mathbb{X}|-1)\sigma, &i=j,
        \end{cases}\label{eq:special-source}
    \end{equation}
    and $\sigma \leq 1/|\mathbb{X}|$. Then, the information state is independent of AoI, i.e., $S_t = (X_t, Z_{t-1}, \Delta_{t-1})$. The $\lambda$-optimal policy satisfies
    \begin{equation}
        \pi_\lambda(x, z, \theta, \delta) = \pi_\lambda(x, z, \delta) = \begin{cases}
            1, &x\neq z, \delta\geq \delta^*_{x, z},\\
            0, &\textrm{otherwise}.
        \end{cases}
    \end{equation}
\end{corollary}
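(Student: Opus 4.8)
\emph{Proof idea.} The plan is to reduce Corollary~\ref{coro:symmetric-source} to Theorem~\ref{theorem:lambda-optimal} by showing that, for the special source~\eqref{eq:special-source}, (i) the MAP estimator coincides with the ZOH estimator and is always in steady state, and (ii) the AoI coordinate is then superfluous in the information state of Lemma~\ref{lemma:information-state}. I would start with the $n$-step kernel. Writing $\mu := 1 - |\mathbb{X}|\sigma$, $\mathbf{1}$ for the all-ones column vector and $J := \mathbf{1}\mathbf{1}^\top$, the source matrix is $Q = \mu I + \sigma J$, which acts as the identity on $\mathrm{span}(\mathbf{1})$ (since $Q$ is stochastic) and as multiplication by $\mu$ on $\mathbf{1}^\perp$ (since $Jv = 0$ there); hence $Q^n = \tfrac{1}{|\mathbb{X}|}J + \mu^n\bigl(I - \tfrac{1}{|\mathbb{X}|}J\bigr)$ for all $n\ge 1$, and in particular $Q^n_{i,i} - Q^n_{i,j} = \mu^n$ for every $j\neq i$. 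The hypothesis $0\le\sigma\le 1/|\mathbb{X}|$ forces $\mu\in[0,1)$ (with $\mu<1$ by irreducibility), so $\mu^n\ge 0$ and therefore $\argmax_{x\in\mathbb{X}}Q^{\theta}_{z,x} = z$ for every $\theta\ge 0$ (the sole edge case $\mu=0$, an i.i.d.\ uniform source, being trivial under any fixed tie-break). Thus $g(z,\theta)=z$ for all $\theta$: the MAP estimate equals the last received symbol irrespective of its age, the estimator is always in steady state, and $\hat{X}_t = Z_t$.

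Next I would argue that the AoI can be removed from the state. Because both $\hat{X}_t = Z_t$ and $\hat{X}_{t-1} = Z_{t-1}$ are now independent of $\Theta_t$ and $\Theta_{t-1}$, rerunning the computation in the proof of Lemma~\ref{lemma:information-state} shows that neither the per-step cost $l(s,u)=\mathbb{E}\bigl[d(X_t,Z_t)\rho(\Delta_t)\mid s,u\bigr]+\lambda u$ nor the transition kernel of $(X_{t+1},Z_t,\Delta_t)$ depends on the coordinate $\Theta_{t-1}$: the source term $X_{t+1}\mid X_t$ is unaffected, $Z_t$ equals $X_t$ on a successful transmission and $Z_{t-1}$ otherwise, and the AoCE update compares $(X_t,Z_t)$ with $(X_{t-1},Z_{t-1})$, whose relevant conditional law given $(X_t,Z_{t-1},\Delta_{t-1})$ is $\Theta_{t-1}$-free. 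One then concludes that $(X_t,Z_{t-1},\Delta_{t-1})$ is a controlled Markov chain, hence a valid information state, so the Bellman equation~\eqref{eq:bellman-equation} admits a bounded solution $V(x,z,\delta)$ with a stationary deterministic minimizer $\pi_\lambda(x,z,\delta)$ (equivalently, the Bellman operator preserves independence of $\Theta_{t-1}$, so an optimal value function and policy may be taken AoI-independent). Finally, I would invoke Theorem~\ref{theorem:lambda-optimal} on this reduced MDP: the partial order $\preceq$ restricted to a fixed error $(x,z)$ is unchanged by deleting the $\theta$-coordinate, so the monotonicity arguments behind Theorem~\ref{theorem:lambda-optimal} transfer verbatim and yield a $\lambda$-optimal policy nondecreasing in $\delta$ for each $(x,z)$; being a switching policy, it transmits only when the current error is nontrivial, i.e.\ only when $x\neq g(z,\theta)=z$, which is precisely the claimed form with thresholds $\delta^*_{x,z}$.

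The one step I expect to be delicate is the claim that the conditional law of $X_{t-1}$ given $(X_t,Z_{t-1},\Delta_{t-1})$ does not depend on $\Theta_{t-1}$ (equivalently, that the AoI is decision-irrelevant once $g\equiv$ ZOH). When $\Delta_{t-1}=0$ this is automatic, since then $X_{t-1}=\hat{X}_{t-1}=Z_{t-1}$; and when $|\mathbb{X}|=2$ it is automatic, since $\Delta_{t-1}\ge 1$ pins $X_{t-1}$ to the unique symbol $\neq Z_{t-1}$. For $|\mathbb{X}|>2$ and $\Delta_{t-1}\ge 1$ it is not a priori obvious, and I would establish it using the full permutation symmetry of $Q$ in~\eqref{eq:special-source}: because the diagonal entries of $Q$ are all equal and $Q$ is invariant under relabeling the non-reference symbols, the source path that produced a length-$\Delta_{t-1}$ consecutive error is, conditionally, uniformly distributed over which non-reference symbol it is trapped in, independently of how long the receiver's belief has been stale. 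Making this symmetry argument rigorous (and checking it survives the conditioning induced by the transmission policy) is the crux of the proof; everything else is bookkeeping on top of Theorem~\ref{theorem:lambda-optimal}.
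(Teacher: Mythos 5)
Your proposal is correct, and its essential step is exactly the paper's proof: Appendix~B of the paper consists solely of showing $Q^n_{i,i}-Q^n_{i,j}=(1-|\mathbb{X}|\sigma)^n\geq 0$, i.e., that the MAP estimate coincides with the ZOH estimate for the source \eqref{eq:special-source}; your identity $Q^n=\mu^n\mathbf{I}+\tfrac{1-\mu^n}{|\mathbb{X}|}\mathbf{1}\mathbf{1}^\top$ is the same formula obtained by spectral projection rather than binomial expansion (and your explicit treatment of the tie case $\sigma=1/|\mathbb{X}|$ is a small refinement the paper glosses over). The remainder of the paper's argument is left implicit, and your second paragraph supplies it correctly: once $g(z,\theta)=z$ for all $\theta$, neither the per-step cost nor the transition of $(X_{t+1},Z_t,\Delta_t)$ involves $\theta$, so the AoI coordinate may be dropped and Theorem~\ref{theorem:lambda-optimal} yields thresholds $\delta^*_{x,z}$. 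Where you diverge is the step you single out as the crux: no permutation-symmetry argument about the conditional law of $X_{t-1}$ is needed. In the paper's MDP dynamics (the case analysis leading to \eqref{eq:Q_0}), the AoCE transition is determined by whether the new estimate equals $x$, equals $g(z,\theta)$, or neither; it never references $X_{t-1}$. Under ZOH the estimate cannot change without a successful reception, so the ``new error'' branch disappears and the reduced kernel is manifestly $\theta$-free. To the extent your worry concerns the literal pathwise definition of $\Delta_t$ (which compares $(X_t,\hat{X}_t)$ with $(X_{t-1},\hat{X}_{t-1})$ and, for $|\mathbb{X}|>2$, appears to require knowledge of $X_{t-1}$ not contained in $S_t$), that issue applies verbatim to the general model and is settled at the level of Lemma~\ref{lemma:information-state} and the stated transition model, before the corollary is ever invoked; it is not specific to the symmetric source, so placing the proof burden of the corollary there is a misallocation rather than a gap in your argument.
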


\begin{corollary}\label{coro:special-case}
    Assume $M$ has the form~\eqref{eq:special-source} and $\rho_{i,j} = \rho$ for all $i\neq j$. Then, the information state is independent of AoI and the instantaneous estimation error, i.e., $S_t = \Delta_{t-1}$. The $\lambda$-optimal policy degenerates into a threshold policy, i.e.,
    \begin{equation}
        \pi_\lambda(x, z, \theta, \delta) = \pi_\lambda(\delta) = \begin{cases}
            1, &\delta\geq \delta^*,\\
            0, &\textrm{otherwise}.
        \end{cases}
    \end{equation}
\end{corollary}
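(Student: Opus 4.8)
The plan is to specialize the switching-structure result of Theorem~\ref{theorem:lambda-optimal} under the two extra hypotheses (the uniform-type symmetric chain~\eqref{eq:special-source} with $\sigma\leq 1/|\mathbb{X}|$, and Hamming distortion) and show that the thresholds $\delta^*_{x,z,\theta}$ collapse to a single value $\delta^*$. The natural route is to first invoke Corollary~\ref{coro:symmetric-source}: under~\eqref{eq:special-source} the belief $Q^{\theta}_{z,x}$ is a symmetric function of the pair $(z,x)$ — the diagonal entry is the same for every state and the off-diagonal entry is the same for every ordered pair — so the MAP estimate depends on $(z,\theta)$ only through whether $Q^\theta_{z,z}$ still dominates, i.e. $g(z,\theta)=z$ for all $\theta$ (this is exactly the ZOH estimator, which is where the claim ``ZOH performs as well as MAP'' comes from). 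Consequently the AoI $\Theta_{t-1}$ carries no information relevant to either the cost or the dynamics, and the information state reduces to $(X_t,Z_{t-1},\Delta_{t-1})$, already established in Corollary~\ref{coro:symmetric-source}.

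Next I would argue that, once $\hat X_t=Z_{t-1}$ always, Hamming distortion makes the per-step cost depend on $(X_t,Z_{t-1})$ only through the indicator $\mathds{1}\{X_t\neq Z_{t-1}\}$, which is itself encoded in $\Delta_t$ (namely $\Delta_t\geq 1$ iff $X_t\neq\hat X_t$). So the instantaneous cost $c(X_t,\hat X_t,\Delta_t)=\mathds{1}\{X_t\neq\hat X_t\}\rho(\Delta_t)$ is a function of $\Delta_t$ alone. The remaining step is to show the \emph{transition kernel} of the reduced chain, after projecting out everything but $\Delta$, is again a well-defined controlled Markov chain on $\mathbb{N}$. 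Here the key computation is that, by the uniform-type structure~\eqref{eq:special-source}, starting from any error state $(x,z)$ with $x\neq z$ the probability that the next pair is again an error equals $Q_{x,x}+\text{(something symmetric)}$ and, crucially, does not depend on which particular states $x,z$ are involved; likewise the probability of leaving the synced state and entering an error is state-independent. I would verify that $\Pr[\Delta_t=\delta\mid \Delta_{t-1},U_t]$ depends only on $(\Delta_{t-1},U_t)$ — failure probability $p_f$ on transmission, combined with the state-independent persistence/escape probabilities of the uniformized chain — so that $\{\Delta_t\}$ is a controlled Markov chain and the Bellman equation can be rewritten purely over $\delta$.

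With the MDP reduced to state space $\mathbb{N}$, action $u\in\{0,1\}$, cost $\rho(\delta)\mathds{1}\{\delta\geq 1\}+\lambda u$, and a transition kernel monotone in $\delta$, the switching structure of Theorem~\ref{theorem:lambda-optimal} applied to this one-dimensional chain yields exactly a threshold: $\pi_\lambda(\delta)=1$ iff $\delta\geq\delta^*$. Since in the reduced model there is only one ``error type'', the qualifier ``identical thresholds across all errors'' is automatic. Alternatively, and perhaps more cleanly, I would note that Theorem~\ref{theorem:lambda-optimal} already gives thresholds $\delta^*_{x,z,\theta}$, and Corollary~\ref{coro:symmetric-source} already kills the $\theta$-dependence leaving $\delta^*_{x,z}$; the only new content is that $\delta^*_{x,z}$ is the same for all $x\neq z$, which follows because relabeling states is an automorphism of the reduced MDP (the cost and kernel are invariant under any permutation of $\mathbb{X}$ acting simultaneously on $X$ and $Z$), so the optimal value function and hence the optimal action in state $(x,z,\delta)$ cannot depend on the identities of $x,z$, only on $\delta$ (and on whether $x=z$, which is the $\delta=0$ versus $\delta\geq1$ distinction).

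I expect the main obstacle to be the bookkeeping in establishing that $\{\Delta_t\}$ alone is genuinely Markov under control — specifically checking that the escape probability from an error state and the entry probability into an error state are state-independent under~\eqref{eq:special-source}, which is where the hypothesis $\sigma\leq 1/|\mathbb{X}|$ (guaranteeing $1-(|\mathbb{X}|-1)\sigma\geq\sigma$, hence $g(z,\theta)=z$) is used. The symmetry/automorphism argument for the uniqueness of the threshold is conceptually short but must be phrased carefully so that it is clear the permutation acts on the source $X$ and the stored value $Z$ simultaneously and commutes with the dynamics, the MAP rule, and the Hamming cost; once that invariance is in place, uniqueness of $\delta^*$ is immediate from uniqueness of the optimal value function.
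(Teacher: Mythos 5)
Your proposal is, in substance, the paper's own argument with the details filled in. The paper's proof is exactly your first route: under Hamming distortion the per-step cost becomes a function of the AoCE alone, and under the uniform symmetric kernel~\eqref{eq:special-source} the persistence/sync/new-error probabilities do not depend on which particular pair $(x,z)$ is in force, so the reduced variable is claimed to satisfy the controlled-Markov conditions~\eqref{eq:markov-property}--\eqref{eq:absorbing-property}; AoI-independence is then imported from Corollary~\ref{coro:symmetric-source}. Your alternative permutation-invariance argument (simultaneous relabeling of $X$ and $Z$ is an automorphism of the reduced MDP, and the symmetric group acts transitively on ordered pairs of distinct states, so the value function and Q-factors depend on $(x,z)$ only through $\mathds{1}\{x=z\}$) is not in the paper; it is a clean complement, and arguably the more solid way to obtain ``identical thresholds,'' since Corollary~\ref{coro:symmetric-source} by itself only removes the $\theta$-dependence.

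One claim in your writeup is false and should be repaired: ``whether $x=z$'' is \emph{not} the same as ``$\delta=0$ versus $\delta\geq 1$.'' In the state $S_t=(X_t,Z_{t-1},\Delta_{t-1})$ the component $\delta$ is the \emph{previous} slot's AoCE while $x$ is the \emph{current} source state, so both $x=z$ with $\delta\geq1$ (the source has just returned to $z$) and $x\neq z$ with $\delta=0$ (a fresh error) are reachable. Hence the current error indicator carries information beyond $\delta$, and the policy asserted by the corollary --- read together with the paper's definition of a threshold policy, i.e.\ the specialization of~\eqref{eq:switching-structure} --- still transmits only when $x\neq z$; what collapses is the threshold value, not the error check. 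The same caveat limits your first route (and, to be fair, the paper's one-line verification of~\eqref{eq:markov-property}--\eqref{eq:absorbing-property} for $S_t=\Delta_{t-1}$): conditioning on histories that include $X_t$, the law of $\Delta_t$ given $(\Delta_{t-1},U_t)$ does depend on the current error relation, so ``$\{\Delta_t\}$ alone is a controlled Markov chain'' holds only once actions are restricted accordingly. Your symmetry argument sidesteps this entirely by proving directly that $\delta^*_{x,z}$ is common to all $x\neq z$, which is the defensible content of the corollary; I would lead with it and drop the $x=z\iff\delta=0$ identification.
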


\subsection{Structure of Optimal Policies}
This section establishes structural properties of the constrained optimal policy. We begin by introducing a class of simple mixture policies.

\begin{definition}
    Let $\alpha \in (0,1)$ denote a mixture coefficient. A randomized policy $(\alpha, \pi_1, \pi_2)$ is called a \emph{mixture policy} if, at each time, it selects $\pi_1$ with probability $\alpha$ and $\pi_2$ with probability $1-\alpha$. It is called a \emph{simple mixture policy} if:
    \begin{itemize}
        \item $\pi_1$ and $\pi_2$ are deterministic policies, and the chains $\{S_t\}$ induced by $\pi_1$ and $\pi_2$ are positive recurrent;
        \item $\pi_1$ and $\pi_2$ differ in exactly one state $i_0 \in \mathbb{S}$, i.e., $\pi_1(s)=\pi_2(s)$ for all $s \neq i_0$ and $\pi_1(i_0)\neq \pi_2(i_0)$;
        \item randomization occurs only when the system visits $i_0$, and it follows the selected policy until the next visit to $i_0$.
    \end{itemize}
\end{definition}

The following theorem establishes the existence of an optimal simple mixture policy and provides a closed-form expression for the mixture coefficient.

\begin{theorem}\label{theorem:optimal-policy}
    The following hold for an optimal policy:
    \begin{enumerate}
        \item[i.] Suppose $F^\lambda = F_{\max}$ for some $\lambda$. The switching policy $\pi_\lambda$ of the form \eqref{eq:switching-structure} is optimal.
        \item[ii.] Suppose no such $\lambda$ as above exists. Then there exist $\lambda>0$ and an arbitrarily small $\epsilon>0$ such that $F^{\lambda + \epsilon} < F_{\max} < F^{\lambda - \epsilon}$, and $\pi_{\lambda+\epsilon}$ and $\pi_{\lambda-\epsilon}$ differ in exactly one state, denoted by $i_0$. Moreover, the simple mixture policy $\pi^*_\alpha = (\alpha, \pi_{\lambda-\epsilon}, \pi_{\lambda+\epsilon})$ is optimal, where the mixture coefficient $\alpha \in (0, 1)$ is given by\footnote{A commonly used approximation is $\tilde{\pi} = (\beta,\pi_{\lambda-\epsilon},\pi_{\lambda+\epsilon})$~\cite{luo2025semantic}.}
        \begin{equation}
            \alpha = \frac{\beta \mu_{\pi_{\lambda - \epsilon}}(i_0)}{\beta \mu_{\pi_{\lambda - \epsilon}}(i_0) + (1-\beta) \mu_{\pi_{\lambda + \epsilon}}(i_0)}, \label{eq:mixture coefficient}
        \end{equation}
        where $\beta = (F_{\max} - F^{\lambda + \epsilon})/ (F^{\lambda - \epsilon} - F^{\lambda + \epsilon})$, and $\mu_{\pi_{\lambda-\epsilon}}$ and $\mu_{\pi_{\lambda+\epsilon}}$ are the stationary distributions of $\{S_t\}$ induced by $\pi_{\lambda-\epsilon}$ and $\pi_{\lambda+\epsilon}$, respectively.
    \end{enumerate}
\end{theorem}
\begin{IEEEproof}
Part (i) holds trivially. Now assume that $F^\lambda \neq F_{\max}$ for all $\lambda$. From Proposition~\ref{proposition:monotonicity}, we know that $F^\lambda$ is piecewise constant and non-increasing. Hence, there exists a breakpoint $\lambda$ at which $F^{\lambda + \epsilon} < F_{\max} < F^{\lambda - \epsilon}$ for arbitrarily small $\epsilon>0$. Given the existence of stationary and deterministic $\lambda$-optimal policies for the unconstrained problem~\eqref{problem:MDP}, the constrained optimal policy is a randomized mixture of two deterministic policies that differ in exactly one state (see, e.g., Beutler and Ross \cite{ross1985CMDP}). 

It remains to show that the simple mixture policy $\pi^*_\alpha$ with mixture coefficient~\eqref{eq:mixture coefficient} is optimal. It can be shown that the chains $\{S_t\}$ induced by $\pi_{\lambda-\epsilon}$ and $\pi_{\lambda+\epsilon}$ are positive recurrent, and that $i_0$ is a positive recurrent state under both policies. This implies that the chain $\{S_t\}$ induced by $\pi^*_\alpha$ has a single recurrent class, and every state in this class has a finite expected return time to $i_0$. Therefore, $\pi^*_\alpha$ induces a regenerative structure with regeneration state $i_0$.

Let $T_c$ denote the regeneration cycle length (i.e., the time between successive visits to $i_0$), and let $N_c(s)$ denote the number of times it visits state $s$ during one regeneration cycle. Because the choice between $\pi_{\lambda-\epsilon}$ and $\pi_{\lambda+\epsilon}$ is made only at regeneration epochs, conditioning on this choice yields
\begin{align*}
    \mathbb{E}^{\pi^*_\alpha}[T_c]
    &= \alpha \mathbb{E}^{\pi_{\lambda-\epsilon}}[T_c] + (1-\alpha)\mathbb{E}^{\pi_{\lambda+\epsilon}}[T_c], \\
    \mathbb{E}^{\pi^*_\alpha}[N_c(s)]
    &= \alpha \mathbb{E}^{\pi_{\lambda-\epsilon}}[N_c(s)] + (1-\alpha)\mathbb{E}^{\pi_{\lambda+\epsilon}}[N_c(s)]. 
\end{align*}
Since $\{S_t\}$ induced by $\pi^*_\alpha$ and $\pi_{\lambda\pm \epsilon}$ are positively recurrent, the stationary distributions $\mu_{\pi^*_\alpha}$ and $\mu_{\pi_{\lambda \pm \epsilon}}$ satisfy
\begin{equation*}
    \mu_{\pi^*_\alpha}(s) = \frac{\mathbb{E}^{\pi^*_\alpha}[N_c(s)]}{\mathbb{E}^{\pi^*_\alpha}[T_c]},\,\, \mu_{\pi_{\lambda \pm \epsilon}}(s) = \frac{\mathbb{E}^{\pi_{\lambda \pm \epsilon}}[N_c(s)]}{\mathbb{E}^{\pi_{\lambda \pm \epsilon}}[T_c]}.
\end{equation*}
In particular, $\mu_{\pi_{\lambda \pm \epsilon}}(i_0) = 1/\mathbb{E}^{\pi_{\lambda \pm \epsilon}}[T_c]$. 

Then we may write
\begin{align*}
    \mu_{\pi^*_\alpha}(s)
    &= \frac{\alpha \mathbb{E}^{\pi_{\lambda - \epsilon}}[N_c(s)] + (1-\alpha)\mathbb{E}^{\pi_{\lambda + \epsilon}}[N_c(s)]}{\alpha \mathbb{E}^{\pi_{\lambda - \epsilon}}[T_c] + (1-\alpha)\mathbb{E}^{\pi_{\lambda + \epsilon}}[T_c]}\\
    &=
    \frac{\alpha \mu_{\pi_{\lambda - \epsilon}}(s)\mathbb{E}^{\pi_{\lambda - \epsilon}}[T_c]
    + (1-\alpha)\mu_{\pi_{\lambda + \epsilon}}(s)\mathbb{E}^{\pi_{\lambda + \epsilon}}[T_c]}
    {\alpha \mathbb{E}^{\pi_{\lambda - \epsilon}}[T_c] + (1-\alpha)\mathbb{E}^{\pi_{\lambda + \epsilon}}[T_c]}\\
    &= \beta \mu_{\pi_{\lambda - \epsilon}}(s) + (1-\beta)\mu_{\pi_{\lambda + \epsilon}}(s),
\end{align*}
where $\beta$ is the effective long-run weight of $\pi_{\lambda - \epsilon}$, given by
\begin{align*}
    \beta
    &=
    \frac{\alpha/\mathbb{E}^{\pi_{\lambda + \epsilon}}[T_c]}
    {\alpha/\mathbb{E}^{\pi_{\lambda + \epsilon}}[T_c] + (1-\alpha)/\mathbb{E}^{\pi_{\lambda - \epsilon}}[T_c]}\\
    &=
    \frac{\alpha \mu_{\pi_{\lambda + \epsilon}}(i_0)}
    {\alpha \mu_{\pi_{\lambda + \epsilon}}(i_0) + (1-\alpha)\mu_{\pi_{\lambda - \epsilon}}(i_0)}.
\end{align*}
It follows that
\begin{align}
    \mathcal{J}(\pi^*_\alpha) 
    &= \sum_{s} \mu_{\pi^*_\alpha}(s) c(s)\notag \\
    &= \sum_{s}\big(\beta \mu_{\pi_{\lambda - \epsilon}}(s) + (1-\beta) \mu_{\pi_{\lambda + \epsilon}}(s)\big) c(s)\notag \\
    &= \beta \mathcal{J}^{\lambda - \epsilon} + (1-\beta) \mathcal{J}^{\lambda + \epsilon}.\label{eq:linearity-1}
\end{align}
Similarly, we obtain
\begin{align}
    F(\pi^*_\alpha) = \beta F^{\lambda - \epsilon} + (1-\beta) F^{\lambda + \epsilon}.\label{eq:linearity-2}
\end{align}
Combining~\eqref{eq:linearity-1} and~\eqref{eq:linearity-2} gives
\begin{align*}
    \mathcal{L}^\lambda(\pi^*_\alpha) 
    &= \mathcal{J}(\pi^*_\alpha) + \lambda F(\pi^*_\alpha)\\
    &= \big(\beta \mathcal{J}^{\lambda - \epsilon} + (1-\beta) \mathcal{J}^{\lambda + \epsilon}\big) \\
    &\quad+ \lambda \big(\beta F^{\lambda - \epsilon} + (1-\beta) F^{\lambda + \epsilon}\big)\\
    &= \beta \big(\mathcal{J}^{\lambda-\epsilon} + (\lambda-\epsilon) F^{\lambda-\epsilon}\big) \\
    &\quad + 
    (1-\beta) \big(\mathcal{J}^{\lambda+\epsilon} + (\lambda+\epsilon) F^{\lambda+\epsilon}\big) + o(\epsilon)\\
    &= \beta \mathcal{L}^{\lambda-\epsilon} + (1-\beta) \mathcal{L}^{\lambda+\epsilon} + o(\epsilon),
\end{align*}
where $o(\epsilon) = \epsilon \beta F^{\lambda-\epsilon} - \epsilon (1-\beta) F^{\lambda+\epsilon}$. Recall that $\mathcal{L}^\lambda$ is continuous in $\lambda$. Taking the limit as $\epsilon \to 0^+$ yields
\begin{equation}
    \mathcal{L}^\lambda(\pi^*_\alpha) = \mathcal{L}^\lambda,\quad \alpha \in (0,1),
\end{equation}
which implies that $\pi^*_\alpha = (\alpha, \pi_{\lambda-\epsilon}, \pi_{\lambda + \epsilon})$ is $\lambda$-optimal for all $\alpha\in (0, 1)$. By choosing $\alpha$ to satisfy~\eqref{eq:mixture coefficient}, we obtain $F(\pi^*_\alpha) = F_{\max}$. It follows from Proposition~\ref{proposition:ross}.ii that $\pi^*_\alpha$ is constrained optimal. This concludes our proof.
\end{IEEEproof}

The two neighboring policies $\pi_{\lambda-\epsilon}$ and $\pi_{\lambda+\epsilon}$ differ in a single \emph{information state}. Since both are switching policies, they differ in exactly one threshold. In the special case of Corollary~\ref{coro:special-case}, the optimal policy is a mixture of two threshold policies: one with threshold $\delta^*$ and the other with threshold $\delta^* + 1$. Moreover, the inequality $F^{\lambda + \epsilon} < F_{\max} < F^{\lambda - \epsilon}$ implies that $\pi_{\lambda+\epsilon}$ is feasible, whereas $\pi_{\lambda-\epsilon}$ is infeasible. 

\section{Structure-Aware Algorithm}\label{sec:computation}
This section develops a structure-aware algorithm, Insec-SPI, for computing the optimal policy, as depicted in Figure~\ref{fig:algorithm_diagram}. The algorithm comprises two key components: 
\begin{itemize}
    \item \textit{Structured Policy Iteration (SPI):} It leverages the switching structure (see Theorem~\ref{theorem:lambda-optimal}) to compute a $\lambda_n$-optimal policy at each iteration $n$, i.e., $\pi_{\lambda_n} = \texttt{SPI}(\lambda_n)$.
    \item \textit{Intersection Search (Insec):} It leverages the monotonicity properties in Proposition~\ref{proposition:monotonicity} to update the Lagrange multiplier, i.e., $\lambda_{n+1} = \texttt{Insec}(\lambda_n)$.
\end{itemize}

The algorithm converges when the optimal multiplier $\lambda^*$ is found. The optimal simple mixture policy is then constructed according to Theorem~\ref{theorem:optimal-policy}. The SPI and Insec-SPI algorithms are summarized in Algorithms~\ref{alg:SPI} and~\ref{alg:full}, respectively.

\begin{figure}[t!]
    \centering
    \scalebox{0.85}{\begin{tikzpicture}[
    node distance=1.5cm,
    box/.style={rectangle, draw, minimum width=2.5cm, minimum height=1cm, align=center},
    dashed box/.style={rectangle, draw, dashed, minimum width=8cm, minimum height=3cm}
]

% Left side - Lagrangian dynamic programming
\node [box, thick, rounded corners=1.5pt, rectangle, minimum width=1.5cm, minimum height=0.8cm] (spi) {SPI\\ (Theorem~\ref{theorem:lambda-optimal})};
\node [box, thick, rounded corners=1.5pt, rectangle, right=1.4cm of spi, minimum width=2cm, minimum height=0.8cm] (lagrange) {Insec Search\\
(Proposition~\ref{proposition:monotonicity})};
\node[box, thick, rounded corners=1.5pt, rectangle, right=1.4cm of lagrange, minimum width=2cm, minimum height=0.8cm] (lam) {Mixture Policy\\(Theorem~\ref{theorem:optimal-policy})};

% Arrows and labels for left side
\draw[->, thick] (spi) -- node[above] {$\pi_{\lambda_n}$} (lagrange);
\draw[->, thick] (lagrange) -- ++(0, -1.) -| node[near start, below] {$\lambda_{n+1}$} (spi);
\draw[->, thick] (lagrange) -- node[above] {$\lambda^*$} coordinate[midway] (mid-cd) (lam);

\end{tikzpicture}}
    \caption{Schematic representation of the Insec-SPI algorithm.}
    \label{fig:algorithm_diagram}
\end{figure}

\begin{algorithm}[t!]
\renewcommand{\algorithmicrequire}{\textbf{Input:}}
\renewcommand{\algorithmicensure}{\textbf{Output:}}
\caption{\texttt{SPI}$(\lambda)$}
\label{alg:SPI}
\begin{algorithmic}[1]
\Require $\lambda$, $s_{\textrm{ref}}$, $\Theta_{\max}, \Delta_{\max}$
\Ensure $\lambda$-optimal policy $\pi_\lambda$
\State \textsc{Initialize:} $V^0 \gets 0$, $\pi_\lambda^{0}(s) \gets \infty$ for all $x \neq g(z, \theta)$
\For{$n = 0, 1, \ldots$}
    \State \textsc{Find} a scalar $\set{L}^n$ and a vector $V^n$ by solving 
    \State \quad $\mathcal{L}^n + V^n(s)
        = \ell(s, \pi_\lambda^{n}(s))
        + \mathbb{E}[V^n(s^\prime)| s, \pi_\lambda^{n}(s)]$
    \State for all $s \in \tilde{\mathbb{S}}$ subject to $V^n(s_{\textrm{ref}}) = 0$.
    \For{each error $(x, z, \theta)$} 
        \For{$\delta = 1, \ldots, \Delta_{\max}$}
            \State $s \gets (x, z, \theta, \delta)$
            \State $Q(s, u) \gets \ell(s,u)
                    + \mathbb{E}[V^n(s^\prime)| s, u]$
            \State $\pi_\lambda^{n+1}(s)
                \gets \argmin_{u} Q(s, u)$
            \If{$\pi_\lambda^{n+1}(s) = 1$}
                \State $\pi_\lambda^{n+1}(s^\prime) \gets 1$ for all $s' \succeq s$
                \State \textbf{break}
            \EndIf
        \EndFor
    \EndFor
    \If{$\pi_\lambda^{n+1} = \pi_\lambda^{n}$}
        \State \Return $\pi_\lambda \gets \pi_\lambda^{n}$
    \EndIf
\EndFor
\end{algorithmic}
\end{algorithm}

\begin{algorithm}[ht!]
\renewcommand{\algorithmicrequire}{\textbf{Input:}}
\renewcommand{\algorithmicensure}{\textbf{Output:}}
\caption{Insec-SPI}
\label{alg:full}
\begin{algorithmic}[1]
\Require $F_{\max}$, $\lambda_{\max}$
\Ensure optimal policy $\pi^*$
\State \textsc{Initialize:} $\lambda_0^- \gets 0$, $\lambda_0^+ \gets \lambda_{\max}$
\If{$F^0 \leq F_{\max}$} \Comment{Check if $\lambda^* = 0$}
    \State \Return $\pi^* \gets \texttt{SPI}(0)$
\EndIf
\For{$n = 1, 2, \dots$} \Comment{Find breakpoint $\lambda^* > 0$}
    \State \textsc{Update} multiplier by intersection search:
    \State \quad $\lambda_n \gets (\mathcal{J}^{\lambda^+_{n-1}} - \mathcal{J}^{\lambda^-_{n-1}})/(F^{\lambda^-_{n-1}} - F^{\lambda^+_{n-1}})$
    \State \quad $\tilde{\mathcal{L}}^{\lambda_n} \gets \mathcal{J}^{\lambda^-_{n-1}} + \lambda_n F^{\lambda^-_{n-1}}$
    \State \quad $\pi_{\lambda_n} \gets \texttt{SPI}(\lambda_n)$
    \If{$\mathcal{L}^{\lambda_n} = \tilde{\mathcal{L}}^{\lambda_n}$} \Comment{Construct optimal policy}
        \State $\pi_{\lambda_n - \epsilon} \gets \texttt{SPI}(\lambda_n - \epsilon)$, $\pi_{\lambda_n + \epsilon} \gets \texttt{SPI}(\lambda_n + \epsilon)$
        \State \textsc{Compute} the mixture coefficient $\alpha$ by:
        \State $\beta \gets (F_{\max} - F^{\lambda + \epsilon})/ (F^{\lambda - \epsilon} - F^{\lambda + \epsilon})$
        \State $\alpha \gets \beta \mu_{\pi_{\lambda - \epsilon}}(i_0) / \big(\beta \mu_{\pi_{\lambda - \epsilon}}(i_0) + (1-\beta) \mu_{\pi_{\lambda + \epsilon}}(i_0)\big)$
        \State \Return $\pi^* \gets (\alpha, \pi_{\lambda_n - \epsilon}, \pi_{\lambda_n + \epsilon})$
    \Else \Comment{Shrink search interval}
    \State $\lambda_n^+ \gets \lambda_n$ if $F^{\lambda_n} \leq F_{\max}$; otherwise $\lambda_n^- \gets \lambda_n$
    \EndIf
\EndFor
\end{algorithmic}
\end{algorithm}

\begin{figure*}[t!]
    \centering
    \subfloat[Bisection search]{
        \scalebox{1}{\begin{tikzpicture}[scale=0.95, every node/.style={font=\small}]
  % Axes
  \draw[->, thick] (0,0) -- (7.5,0) node[below] {$\lambda$};
  \draw[->, thick] (0,0) -- (0,4) node[left] {$F^\lambda$};
  \node at (-0.4, 1.8) {$F_{\max}$};
  \draw[thick] (0., 1.8) -- (0.15, 1.8);

  \draw[thick, ->] (2, 0.4) -- (2, 0) node at (2, 0.6) {$\lambda^*$};

  % Lambda x-axis markers
  \node at (0,-0.2) {$0$};
  \node at (3.4,-0.2) {$\lambda_1$};
  \draw[thick] 
    ($(3.4,1.2)+(-0.1,-0.1)$) -- ($(3.4,1.2)+(0.1,0.1)$)
    ($(3.4,1.2)+(-0.1,0.1)$) -- ($(3.4,1.2)+(0.1,-0.1)$);
  
  \draw[thick] (3.4,0.) -- (3.4, 0.15);
  \node at (1.7,-0.2) {$\lambda_2$};
  \draw[thick] 
    ($(1.7,2.3)+(-0.1,-0.1)$) -- ($(1.7,2.3)+(0.1,0.1)$)
    ($(1.7,2.3)+(-0.1,0.1)$) -- ($(1.7,2.3)+(0.1,-0.1)$);
    
  \draw[thick] (1.7,0.) -- (1.7, 0.15);
  \node at (2.55,-0.2) {$\lambda_3$};
  \draw[thick] 
    ($(2.55,1.2)+(-0.1,-0.1)$) -- ($(2.55,1.2)+(0.1,0.1)$)
    ($(2.55,1.2)+(-0.1,0.1)$) -- ($(2.55,1.2)+(0.1,-0.1)$);
    
  \draw[thick] (2.55,0.) -- (2.55, 0.15);
  \node at (6.8,-0.2) {$\lambda_{\max}$};
  \draw[thick] (6.8,0.) -- (6.8, 0.15);

  % Points for Lagrangian values (black)
  \coordinate (l0) at (0,3.5);
  \coordinate (l0_) at (0.6,3.5);
  \coordinate (l1) at (0.6,2.3);
  \coordinate (l1_) at (2,2.3);
  \coordinate (l2) at (2, 1.2);
  \coordinate (l2_) at (5, 1.2);
  \coordinate (l3) at (5,0.6);
  \coordinate (l3_) at (6.8,0.6);
  \coordinate (lmax) at (6.8,.05);

  % Convex envelope lines (black)
  \draw[thick] (l0) -- (l0_);
  \draw[thick] (l1) -- (l1_);
  \draw[thick] (l2) -- (l2_);
  \draw[thick] (l3) -- (l3_);
  \draw[thick] (lmax) -- (7.2,.05);

  \draw[dashed] (l0_) -- (l1);
  \draw[dashed] (l1_) -- (l2);
  \draw[dashed] (l2_) -- (l3);
  \draw[dashed] (l3_) -- (lmax);

  % Black dots
  \filldraw[black] (l0) circle (2pt);
  \filldraw[black] (l1) circle (2pt);
  \filldraw[black] (l2) circle (2pt);
  \filldraw[black] (l3) circle (2pt);
  \filldraw[black] (lmax) circle (2pt);

  \draw[black, thick] (l0_) circle (2pt);
  \draw[black, thick] (l1_) circle (2pt);
  \draw[black, thick] (l2_) circle (2pt);
  \draw[black, thick] (l3_) circle (2pt);

  % Gamma points (red)
  \coordinate (g1) at (1.7,3.2);
  \coordinate (g2) at (2.5,3.25);
  \coordinate (g1base) at (1.7,0);
  \coordinate (g2base) at (3.2,0);

\end{tikzpicture}}
        \label{fig:bisection-search}
    }
    \hspace{1cm}
    \subfloat[Intersection search]{
        \scalebox{1}{\begin{tikzpicture}[scale=0.95, every node/.style={font=\small}]
  % Axes
  \draw[->, thick] (0,0) -- (7.5,0) node[below] {$\lambda$};
  \draw[->, thick] (0,0) -- (0,4) node[left] {$\mathcal{L}^\lambda$};
  \draw[thick, ->] (2, 0.4) -- (2, 0) node at (2, 0.6) {$\lambda^*$};

  % Lambda x-axis markers
  \node at (0,-0.2) {$0$};
  \node at (1.6,-0.2) {$\lambda_1$};
  \draw[thick] (1.6,0.) -- (1.6, 0.15);
  \node at (2.5,-0.2) {$\lambda_2$};
  \draw[thick] (2.5,0.) -- (2.5, 0.15);
  \node at (2,-0.2) {$\lambda_3$};
  \draw[thick] (2,0.) -- (2, 0.15);
  \node at (6.8,-0.2) {$\lambda_{\max}$};
  \draw[thick] (6.8,0.) -- (6.8, 0.15);

  % Points for Lagrangian values (black)
  \coordinate (l0) at (0,0.5);
  \coordinate (l1) at (0.6,1.5);
  \coordinate (l2) at (2, 2.8);
  \coordinate (l3) at (5,3.4);
  \coordinate (lmax) at (6.8,3.5);

  % Convex envelope lines (black)
  \draw[thick] (l0) -- (l1) -- (l2) -- (l3) -- (lmax);

  % Gamma points (red)
  \coordinate (g1) at (1.6,3.2);
  \coordinate (g2) at (2.5,3.25);

  \draw[red, thick] (g1) +(-0.08,-0.08) rectangle +(0.1,0.1) node[above left] {$(\lambda_1, \tilde{\mathcal{L}}^{\lambda_1})$};
  \draw[red, thick] (g2) +(-0.08,-0.08) rectangle +(0.1,0.1) node[above right] {$(\lambda_2, \tilde{\mathcal{L}}^{\lambda_2})$};
  \draw[red, thick] (l2) +(-0.096,-0.096) rectangle +(0.11,0.11) node[below right] {$(\lambda_3, \tilde{\mathcal{L}}^{\lambda_3})$};

  % Black dots
  \filldraw[black] (l0) circle (2pt);
  \filldraw[black] (l1) circle (2pt);
  \filldraw[black] (l2) circle (1.75pt);
  \filldraw[black] (l3) circle (2pt);
  \filldraw[black] (lmax) circle (2pt);

  % Dotted lines and intermediate red dots
  \draw[red, dashed] (l1) -- (g1);
  \draw[red, dashed] (g1) -- (l3);
  \draw[red, dashed] (l2) -- (g2);
  % \draw[red, dashed] (g1) -- ++(1.5,0);
  % \draw[red, dashed] (g2) -- ++(0.8,0);

  \coordinate (g1base) at (1.7,0);
  \coordinate (g2base) at (3.2,0);
  % \draw[dotted] (g1) -- (g1base) node[below] {\(\lambda^1_{-}\)};
  % \draw[dotted] (g2) -- (g2base) node[below] {\(\gamma^2\)};

\end{tikzpicture}}
        \label{fig:intersection-search}
    }
    \caption{Illustration of the bisection and intersection search methods. The function $F^\lambda$ is piecewise constant and non-increasing, while $\mathcal{L}^\lambda$ is piecewise linear and concave. Crosses in (a) indicate bisection points, and squares in (b) denote intersection points. The optimal Lagrange multiplier $\lambda^*$ corresponds to a breakpoint of $F^\lambda$ and a corner point of $\mathcal{L}^\lambda$.}
    \label{fig:search-algorithm}
\end{figure*}

\subsection{Structured Policy Iteration} 
Computing a $\lambda$-optimal policy using classical dynamic programming methods can be computationally expensive. Methods such as policy iteration (PI) update the recursions in~\eqref{eq:Q-factor} in an unstructured manner~\cite{puterman1994markov}, requiring the evaluation of all state-action pairs when updating the Q-factor. Luo and Pappas~\cite{luo2025cost} proposed an SPI algorithm to compute threshold-like policies. In our problem, $\pi_\lambda(s)$ is non-decreasing in the error holding time $\delta$. If $\pi_\lambda(s)=1$, then the optimal action for all $s^\prime \succeq s$ is to transmit without further computation. To exploit this known structure, SPI evaluates all possible policies in a structured manner; specifically, it searches for the threshold value for each error $(x, z, \theta)$ in increasing order of $\delta$. 

For numerical tractability, the algorithm operates over a finite state space. We truncate the age and semantic variable with sufficiently large bounds, i.e., 
\begin{equation*}
    \tilde{\Theta}_t = \min\{\Theta_t, \Theta_{\max}\}~\textrm{and}~\tilde{\Delta}_t = \min\{\Delta_t, \Delta_{\max}\}.
\end{equation*}
Let $\tilde{\mathbb{S}}$ denote the state space of $\tilde{S}_t = (X_t, Z_{t-1}, \tilde{\Theta}_{t-1}, \tilde{\Delta}_{t-1})$. The rationale for truncation is as follows: 
\begin{itemize}
    \item \emph{Truncation of AoI:} For finite-state irreducible Markov chains, the belief vector $Q^{\Theta_t}_{Z_t}$ converges exponentially fast in the AoI $\Theta_t$, as illustrated in Figure~\ref{fig:AoCI}.
    \item \emph{Truncation of AoCE:} Once the MAP estimator enters its steady state, the error persistence probability diminishes as the error holding time $\Delta_t$ increases. See~\cite[Theorem~3]{luo2025cost} for a detailed analysis.
\end{itemize}

We now present the details of the PI algorithm, followed by the proposed SPI algorithm. The PI operates as follows:
\begin{itemize}
    \item[1)] \textit{Initialization:} Select an initial switching policy $\pi^0_\lambda$, a reference state $s_\textrm{ref}$, and truncation bounds $\Theta_{\max}, \Delta_{\max}$.
    \item[2)] \textit{Policy evaluation:} Find a scalar $\mathcal{L}^n$ and a vector $V^n$ by solving the Bellman equation
    \begin{equation*}
        \mathcal{L}^n + V^n(s) = \ell(s, \pi^n_\lambda(s)) + \mathbb{E}[V^n(s^\prime)|s, \pi_\lambda^n(s)] 
    \end{equation*}
    for all $s \in \tilde{\mathbb{S}}$ such that $V^n(s_\textrm{ref}) = 0$.
    \item[3)] \textit{Policy improvement:} Find $\pi^{n+1}_\lambda$ that satisfies for \emph{all} $s$
    \begin{equation*}
        \pi^{n+1}_\lambda(s)
        = \argmin_{u}\big[\ell(s, u) + \mathbb{E}[V^n(s^\prime)|s, u]\big].
    \end{equation*}
    \item[4)] \textit{Termination:} If $\pi^{n+1}_\lambda = \pi^n_\lambda$, the algorithm terminates with $\pi_\lambda = \pi^n_\lambda$; otherwise set $n=n+1$ and return to Step~2.
\end{itemize}

As can be seen in Step 3, PI evaluates all state-action pairs in an unstructured manner. This is computationally inefficient, as the optimal action for all states exceeding the thresholds is simply to transmit without further computation.

Instead, SPI updates the policy in the increasing order of $\delta$. In Step~3, SPI operates as follows:
\begin{itemize}
    \item[3)] \textit{Policy improvement:} For each $(x, z, \theta)$, update $\pi^{n+1}_\lambda(s)$ in the increasing order of $\delta$:
    \begin{itemize}
        \item[a.] Initialize $s = (x, z, \theta, \delta)$ with $\delta = 1$. 
        \item[b.] Update the policy as
        \begin{equation*}
            \pi^{n+1}_\lambda(s)
            = \argmin_{u} \big[\ell(s, a) + \mathbb{E}\big[V^n(s^\prime)|s, u]\big].
        \end{equation*}
        \item[c.] If $\pi^{n+1}_\lambda(s) = 1$, set $\pi^{n+1}_\lambda(s^\prime) = 1$ for all $\delta^\prime \geq \delta$ and proceed to Step~3(a) with an unvisited error. Otherwise, increase $\delta=\delta + 1$ and return to Step~3(b).
    \end{itemize}
\end{itemize}

The SPI is summarized in Algorithm~\ref{alg:SPI}. The complexity of Step~3 is $\mathcal{O}(|\mathbb{X}|^2\Theta_{\max}\Delta_{\max})$, which is significantly smaller than the $\mathcal{O}(|\tilde{\mathbb{S}}|^2|\mathbb{U}|)$ complexity of unstructured PI.

\subsection{Intersection Search}
Finding the optimal multiplier $\lambda^*$ that satisfies Theorem~\ref{theorem:optimal-policy}.ii is computationally prohibitive, as each multiplier update requires solving a high-dimensional MDP. A common approach is the \emph{bisection search}, which exploits the monotonicity of $F^\lambda$ to update the multiplier, as illustrated in Figure~\ref{fig:bisection-search}. The bisection search proceeds as follows:
\begin{enumerate}
    \item[1)] \textit{Initialization:} Choose a sufficiently large $\lambda_{\max}$ such that $F^{\lambda_{\max}} <F_{\max}$. Set interval $I_0 = [\lambda_0^{-},\lambda_0^{+}] = [0, \lambda_{\max}]$.
    \item[2)] \textit{Multiplier update:} Update the multiplier as the middle point of the search interval, i.e.,
    \begin{equation}
        \lambda_{n} = \texttt{Bisec}(\lambda_{n-1}) = \frac{\lambda_{n-1}^- + \lambda_{n-1}^+}{2}.
    \end{equation}
    Solve $\pi_{\lambda_n} = \texttt{SPI}(\lambda_n)$ at multiplier $\lambda_n$ and obtain $F^{\lambda_n}$. Update the search interval as: $I_n = [\lambda_n, \lambda_{n-1}
    ^{+}]$ if $F^{\lambda^n} \geq F_{\max}$; otherwise, $I_n = [\lambda_{n-1}^{-}, \lambda_n]$.
    \item[3)] \textit{Stop criterion:} The algorithm terminates when a desired accuracy is achieved, i.e., $|\lambda_n^{+} - \lambda_n^{-}| < \epsilon$. Otherwise, set $n=n+1$ and go to Step~2.
\end{enumerate}

The bisection search has time complexity $\mathcal{O}(\log_2(\frac{\lambda_{\max}}{\epsilon}))$. It requires many iterations to achieve the desired accuracy.

\begin{figure*}[t!]
    \centering
    \begin{minipage}{0.49\textwidth}
        \centering
        \includegraphics[width=0.95\linewidth]{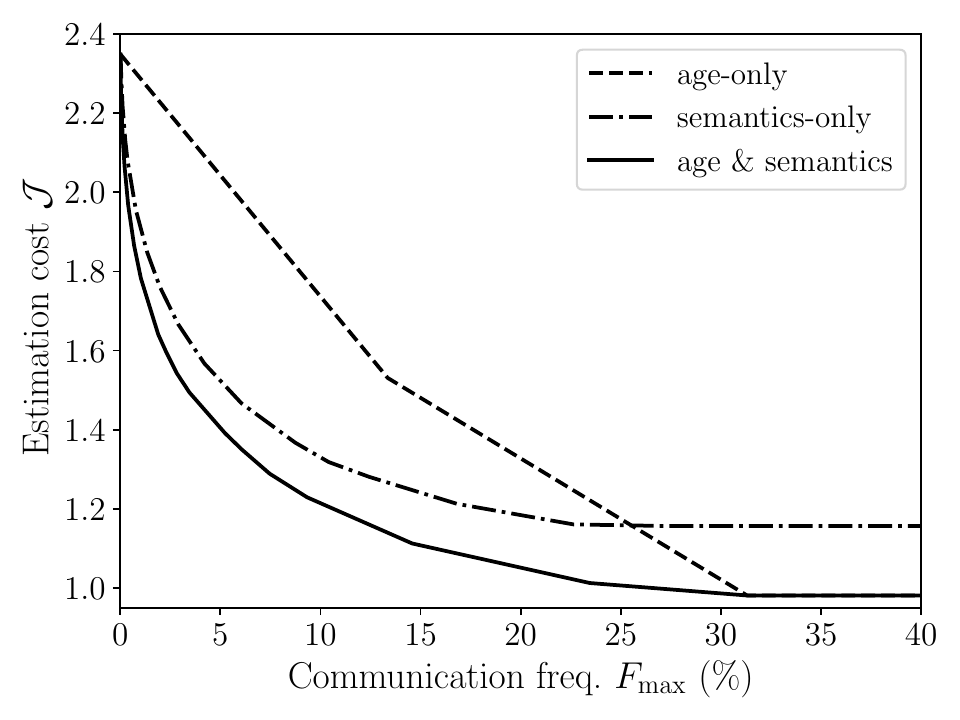}
        \caption{Estimation performance by leveraging different information attributes.}
        \label{fig:avg-cost}
    \end{minipage}
    \hfill
    \begin{minipage}{0.49\textwidth}
        \centering
        \scalebox{0.815}{\begin{tikzpicture}

\pgfmathsetmacro{\lam}{100000}

\begin{axis}[
    width=10.5cm,
    height=8cm,
    xlabel={\large $\textrm{Search tolerance}$ \Large $\epsilon$},
    ylabel={\large $\textrm{Number of iterations}$},
    xmode=log,
    log basis x=10,
    ymin=0, ymax=40,
    xmin=1e-6, xmax=1e-1,
    grid=both,
    minor tick num=1,
    major grid style={line width=.1pt,draw=gray!40},
    minor grid style={line width=.01pt,draw=gray!10},
    xtick={1e-1, 1e-2, 1e-3, 1e-4, 1e-5, 1e-6},
    xticklabels={$10^{-1}$, $10^{-2}$, $10^{-3}$, $10^{-4}$, $10^{-5}$, $10^{-6}$},
    ytick={0,5,10,15,20,25,30,35,40},
    legend style={at={(0.68,0.97)}, anchor=north west},
]
    
% plot the function log2(lambda/epsilon)
\addplot [
    domain=1e-6:1e-1, 
    samples=100, 
    color=black, 
    thick]{log2(\lam/(0.5*x))}; 
    \addlegendentry{\normalsize $\log_2(\lambda/\epsilon)$}
% plot bisection search results
\addplot[
    only marks, 
    mark=x,
    line width=0.8,
    color=black,
    mark size=4.5pt,
]
coordinates {
    (1e-1,21)
    (1e-2,25)
    (1e-3,28)
    (1e-4,31)
    (1e-5,35)
    (1e-6,38)
};\addlegendentry{\normalsize $\textrm{bisection}$}

% plot intersection search results
\addplot[
    only marks,
    mark=square,
    line width=0.8,
    color=black,
    mark size=3pt,
]
coordinates {
    (1e-1,5)
    (1e-2,5)
    (1e-3,5)
    (1e-4,5)
    (1e-5,5)
    (1e-6,5)
};\addlegendentry{\normalsize $\textrm{intersection}$}

\addplot [
    dashed,
    domain=1e-6:1e-1, 
    samples=100, 
    color=black, 
    thick]{5}; 

\end{axis}
\end{tikzpicture}}
        \caption{Time complexity of different multiplier update methods.}
        \label{fig:complexity}
    \end{minipage}
\end{figure*}

Luo and Pappas~\cite{luo2025semantic} proposed an efficient multiplier update method, termed \emph{intersection search}, which significantly outperforms bisection search in terms of time complexity. This method exploits the following properties of $\mathcal{L}^\lambda$: (i) $\mathcal{L}^\lambda$ is piecewise linear and concave, and (ii) the optimal multiplier $\lambda^*$ corresponds to a corner point of $\mathcal{L}^\lambda$, as illustrated in Fig.~\ref{fig:intersection-search}.

The intersection search proceeds as follows:
\begin{enumerate}
    \item[1)] \textit{Initialization:} Choose a sufficiently large $\lambda_{\max}$ such that $F^{\lambda_{\max}} <F_{\max}$. Set interval $I_0 = [\lambda_0^{-},\lambda_0^{+}] = [0, \lambda_{\max}]$.
    \item[2)] \textit{Multiplier update:} Compute the intersection point of two tangents: one is formed by point $(\lambda_{n-1}^{-}, \mathcal{L}^{\lambda_{n-1}^{-}})$ with a slop of $F^{\lambda_{n-1}^{-}}$, and another is formed by point $(\lambda_{n-1}^{+}, \mathcal{L}^{\lambda_{n-1}^{+}})$ with a slop of $F^{\lambda_{n-1}^{+}}$. The intersection point is given by $(\lambda_n, \tilde{\mathcal{L}}^{\lambda_n})$, where
    \begin{subequations}
        \begin{align}
        \lambda_n &= \texttt{Insec}(\lambda_{n-1}) = \frac{\mathcal{J}^{\lambda^{+}_{n-1}} - \mathcal{J}^{\lambda^{-}_{n-1}}}{F^{\lambda^{-}_{n-1}} - F^{\lambda^{+}_{n-1}}}, \label{eq:inter-1}\\
        \tilde{\mathcal{L}}^{\lambda_n} &= \mathcal{J}^{\lambda^{-}_{n-1}} + \lambda_n F^{\lambda^{-}_{n-1}}.\label{eq:inter-2}
    \end{align}
    \end{subequations}
    Solve $\pi_{\lambda_n} = \texttt{SPI}(\lambda_n)$ at multiplier $\lambda_n$ and obtain $F^{\lambda_n}$. Update the search interval as: $I_n = [\lambda_n, \lambda_{n-1}
    ^{+}]$ if $F^{\lambda^n} \geq F_{\max}$; otherwise, $I_n = [\lambda_{n-1}^{-}, \lambda_n]$.
    \item[3)] \textit{Stop criterion:} The algorithm terminates when the intersection point is located on $\mathcal{L}^\lambda$, i.e., $\tilde{\mathcal{L}}^{\lambda_n}=\mathcal{L}^{\lambda_n}$. Otherwise, set $n=n+1$ and go to Step~2.
\end{enumerate}

\begin{figure*}[ht!]
    \centering
    \begin{minipage}{0.49\textwidth}
        \centering
        \includegraphics[width=0.95\linewidth]{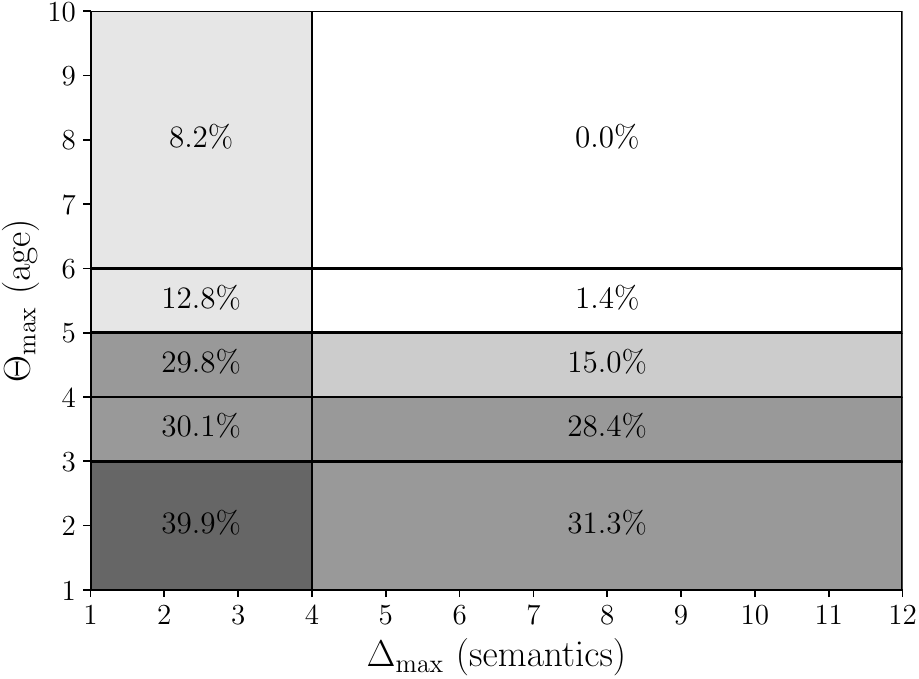}
        \caption{Relative performance gap induced by state space truncation.}
        \label{fig:distance}
    \end{minipage}
    \hfill
    \begin{minipage}{0.49\textwidth}
        \centering
    \includegraphics[width=0.95\linewidth]{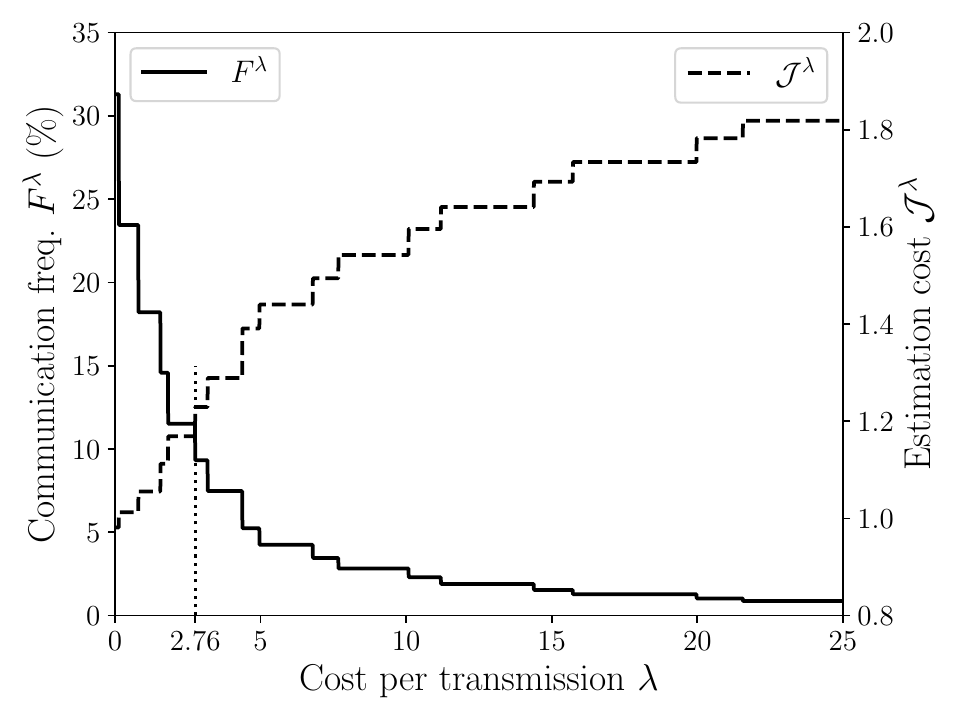}
        \caption{Average costs as a function of multiplier $\lambda$.}
        \label{fig:policy-search}
    \end{minipage}
\end{figure*}

\begin{figure*}[ht!]
    \centering
    \begin{minipage}{0.49\textwidth}
        \centering
        \includegraphics[width=0.95\linewidth]{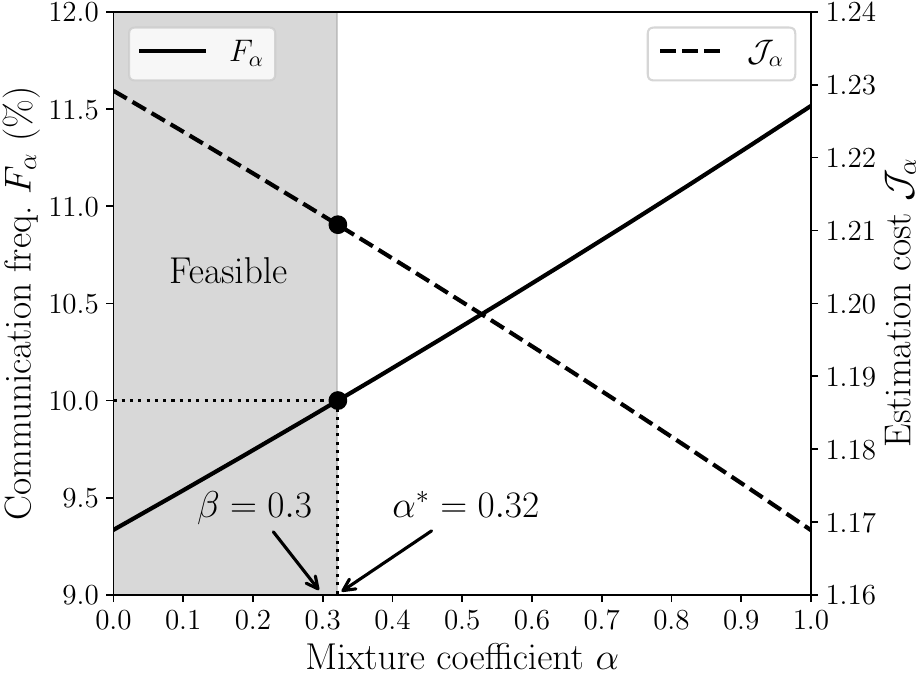}
        \caption{Effect of the mixture coefficient $\alpha$ on the optimal policy.}
        \label{fig:mixure_policy}
    \end{minipage}
    \hfill
    \begin{minipage}{0.49\textwidth}
        \centering
    \includegraphics[width=0.93\linewidth]{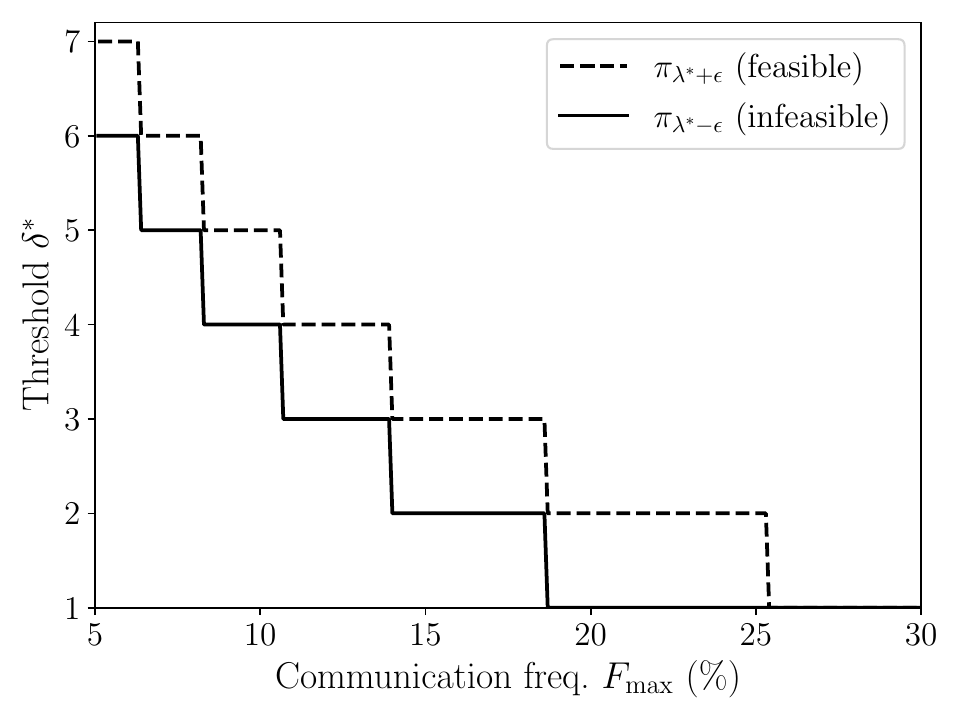}
    \caption{Policy structure of a non-prioritized system.}
    \label{fig:thresholds}
    \end{minipage}
\end{figure*}

Notably, the intersection search method finds the optimal multiplier $\lambda^*$ in only a few iterations because (i) the number of iterations is bounded by the number of segments in $[0, \lambda_{\max}]$, (ii) the time complexity is independent of the accuracy tolerance $\epsilon$, and (iii) choosing a large $\lambda_{\max}$ does not significantly increase the complexity.

\section{Numerical Results}\label{sec:simulations}
This section presents numerical results to validate our theoretical findings on the optimal policy. We consider a source with $|\mathbb{X}| = 3$ states and
\begin{equation*}
    Q= \begin{bmatrix}
        0.85 & 0.1 & 0.05 \\
        0.25 & 0.65 & 0.1 \\
        0.2 & 0.1 & 0.7
    \end{bmatrix}.
\end{equation*}
The urgency of lasting impact is modeled as
\begin{equation*}
    \rho_{i,j}(\delta) = \begin{cases}
        4e^{0.4\delta}, & \textrm{missed alarms}~(i = 1, j\neq 1),\\
        2e^{0.25\delta}, & \textrm{false alarms}~(i \neq 1, j =1),\\
        e^{0.1\delta}, & \textrm{normal errors}~(i\neq j, i,j\neq 1).
    \end{cases}
\end{equation*}
The other parameters are set as follows: the packet-drop probability $p_f = 0.3$, the truncation bounds $\Theta_{\max}=40$ and $\Delta_{\max} = 40$, and the maximum multiplier $\lambda_{\max} = 10^5$. 

\subsection{Performance Comparison}
We compare the proposed solution, which leverages both age and semantics, with two benchmark approaches.
\begin{itemize}
    \item \textit{Age-only strategy:} The receiver uses the MAP estimator, and the sensor minimizes the average distortion. This approach is widely adopted in the control literature.
    \item \textit{Semantics-only strategy:} The receiver uses the ZOH estimator, and the sensor minimizes the average semantics-aware estimation cost. This approach is common in the semantic communication literature.
\end{itemize}

Figure~\ref{fig:avg-cost} compares the estimation performance achieved by leveraging different information attributes. It can be observed that when communication is scarce ($F_{\max} < 0.2$), leveraging the semantic attribute significantly improves estimation performance. This is because the receiver cannot reliably predict the state with sparsely observed samples; therefore, the sensor must prioritize reducing prolonged errors. Conversely, when communication is adequate ($F_{\max} > 0.3$), leveraging information age alone can achieve optimal performance, since the receiver can make reliable predictions with frequent transmissions. Moreover, in this regime, the optimal policy yields the same performance as the always-transmit policy. This suggests that increasing the transmission frequency does not necessarily improve estimation performance, as many updates are of marginal importance to the overall system performance.

Notably, the proposed strategy achieves the desired estimation performance with substantially reduced communication overhead. For instance, an average cost of $\mathcal{J} = 1.2$ can be achieved using approximately 50\% fewer transmissions than the benchmark approaches.

\subsection{Computational Complexity}
Figure~\ref{fig:complexity} compares the complexity of the intersection search and bisection search methods. The intersection search converges once the intersection point lies exactly on $\mathcal{L}^\lambda$; hence, its complexity is independent of the search tolerance $\epsilon$. In contrast, the number of multiplier updates required by the bisection search increases logarithmically as $\epsilon \to 0$. This leads to a substantial computational burden since each update involves solving a high-dimensional MDP.

\subsection{Asymptotic Optimality} 
Figure~\ref{fig:distance} illustrates the relative performance gap introduced by truncation. We observe that the performance converges for moderate truncation bounds. Importantly, increasing the age truncation bound $\Theta_{\max}$ offers two main benefits. First, a larger $\Theta_{\max}$ significantly reduces the performance gap, as the receiver can exploit a longer history to improve predictability. Second, a larger $\Theta_{\max}$ helps prevent prolonged consecutive errors, so that a relatively small $\Delta_{\max}$ suffices to achieve a close approximation. In contrast, when $\Theta_{\max}$ is small, the resulting policy deviates significantly from optimality. In this regime, the receiver cannot make reliable predictions; consequently, given the constraint on communication frequency, the transmitter must prioritize correcting prolonged errors. As a result, many transient errors remain unaddressed, leading to degraded estimation performance.

\subsection{Policy Structure} 
Figure~\ref{fig:policy-search} illustrates the average costs as a function of multiplier $\lambda$. It is observed that $F^\lambda$ and $\mathcal{J}^\lambda$ are monotonic and piecewise constant functions. By Theorem~\ref{theorem:optimal-policy}.i, a deterministic optimal policy exists only when $F_{\max}$ is chosen from the value domain of $F^\lambda$. For instance, when $F_{\max} = 11.52\%$, we obtain $F^{\lambda^*} = F_{\max}$ for all $\lambda^* \in [1.83, 2.76)$. In this case, the corresponding switching policy is optimal. Otherwise, the optimal policy is a simple mixture policy. For example, when $F_{\max} = 10 \% $, there exists no $\lambda^*>0$ such that $F^{\lambda^*} = F_{\max}$. Instead, the optimal multiplier is found at $\lambda^* = 2.76$ which satisfies $F^{\lambda^*+\epsilon} < F_{\max} < F^{\lambda^*-\epsilon}$ for arbitrarily small $\epsilon>0$. From Theorem~\ref{theorem:optimal-policy}.ii, the optimal policy is a simple mixture policy $\pi^* = (\alpha^*, \pi_{\lambda^*-\epsilon}, \pi_{\lambda^*+\epsilon})$, where the optimal mixture coefficient is obtained by $\alpha^* = 0.32$, and $\pi_{\lambda^*-\epsilon}$ (infeasible) and $\pi_{\lambda^*+\epsilon}$ (feasible) are neighboring switching policies that differ in exactly one threshold. Let
\[
\mathcal{J}_\alpha = \mathcal{J}(\alpha, \pi_{\lambda^*-\epsilon}, \pi_{\lambda^*+\epsilon}), \,\,F_\alpha = F(\alpha, \pi_{\lambda^*-\epsilon}, \pi_{\lambda^*+\epsilon}).
\]

Figure~\ref{fig:mixure_policy} shows the impact of the mixture coefficient $\alpha$ on the policy performance. As $\alpha$ increases, the resulting mixture policy relies more on the infeasible policy $\pi_{\lambda^*-\epsilon}$, resulting in a lower estimation cost. The optimal mixture coefficient $\alpha^*$ is attained when the frequency constraint is active, i.e., when $F_{\alpha^*} = 10\%$. We also observe that $F_\beta \neq 10\%$ (where $\beta$ is defined in Theorem~\ref{theorem:optimal-policy}.ii), implying that the mixture policy $\tilde{\pi} = (\beta,\pi_{\lambda^*-\epsilon},\pi_{\lambda^* +\epsilon})$ is suboptimal. Table~\ref{table:coefficient} reports additional results for different values of $F_{\max}$. The results show that the optimal mixture policy $\pi^*$ consistently achieves a slightly lower cost than the approximate policy $\tilde{\pi}$.

\begin{table}[t!]
\centering
\caption{Comparison of Different Mixture Policies}
\label{table:coefficient}
\begin{tabular}{c |ccccccc}
    \toprule
     $F_{\max}$ (\%) & $5$ & $10$ & $15$ & $20$ & $25$ & $30$ \\
    \midrule
    $\mathcal{J}(\pi^*)$ & 1.4029& 1.2108& 1.1059& 1.0408& 1.0012 & 0.9829 \\
    $\mathcal{J}(\hat{\pi})$ & 1.4032& 1.2118& 1.1066& 1.0418& 1.0023 & 0.9831 \\
    \bottomrule
\end{tabular}
\end{table}

Finally, we showcase the thresholds of the optimal policy by examining a non-prioritized system as described in Corollary~\ref{coro:special-case}. We set $M_{i,j} = 0.1$ and $\rho_{i, j}(\delta) = 1.2 e^{0.55\delta} + 0.3$ for all $i\neq j$. As shown in Figure~\ref{fig:thresholds}, when $F_{\max}< 25\%$, Condition (ii) of Proposition~\ref{proposition:ross} is active, and the optimal policy is a mixture of two threshold policies. The threshold of $\pi_{\lambda^*+\epsilon}$ is larger than that of $\pi_{\lambda^*-\epsilon}$ by exactly one. When $F_{\max}>25\%$, Condition (i) of Proposition~\ref{proposition:ross} holds, and the reactive policy (which transmits whenever an error occurs) is optimal.

\section{Conclusion}\label{sec:conclusion}
In this paper, we leveraged both the age and semantics of information in the estimation process to reduce unnecessary transmissions for predictable or low-impact errors. We showed that the optimal communication policy is a simple mixture of two neighboring switching policies whose thresholds depend on the instantaneous error and AoI, and that the structure simplifies to threshold policies in symmetric and non-prioritized systems. We derived a closed-form expression of the mixture coefficient. Building on these results, we developed the Insec-SPI algorithm, which computes the optimal policy with significantly reduced computational overhead. Numerical results demonstrate that jointly exploiting age and semantic attributes significantly improves estimation performance.

\appendices
\renewcommand{\theequation}{A\arabic{equation}}
\setcounter{equation}{0} % reset equation counter

\section{Proof of Theorem~\ref{theorem:existence}}\label{proof:theorem existence}
\textit{Part (i):} Fix an arbitrary $\lambda \geq 0$. It suffices to show that under Assumption~\ref{assumption:existence} there exists at least one admissible policy that yields a finite average cost. Consider an reactive policy $\tilde{\pi}$ that initiates transmission whenever an error occurs. Fix a realization in which the source is in state $x$ and the receiver holds an incorrect estimate $z$ with error holding time $\theta_0$. Since the Markov chain is finite and irreducible, the MAP estimator enters a steady state after a finite number of time steps. Assume that the estimator operates in its steady state, i.e., $\hat{x} = g(z, \theta) = g(z, \theta_0)$ for all $\theta\geq \theta_0$. 

Conditioned on this steady-state behavior, the error persists if the source remains in state $x$ (with probability $M_{x, x}$), and the transmission fails (with probability $p_f$). Thus, the error persistence probability is $M_{x,x}p_f$, and the probability that the error persists for exactly $n$ consecutive time slots is
\begin{equation*}
    p_{x,\hat{x}}(n) = (1-M_{x,x}p_f)(M_{x,x}p_f)^n,
\end{equation*}
The persistence cost along this sample path is given by
\begin{equation*}
    C_{x,\hat{x}}(n) = 
    \sum_{t=0}^n \rho_{x, \hat{x}}(\delta_0+t). 
\end{equation*}
Taking expectation yields
\begin{align*}
\mathcal{J}(\tilde{\pi}) &= \sum_{x\neq\hat{x}}
    \sum_{n=0}^{\infty}p_{x,\hat{x}}(n)C_{x,\hat{x}}(n) \\
    &= \sum_{x\neq \hat{x}}(1-M_{x,x}p_f)\sum_{n=0}^{\infty}\xi_{x, \hat{x}}(n),
\end{align*}
where $\xi_{x, \hat{x}}(n) = (M_{x,x}p_f)^n\rho_{x, \hat{x}}(\delta_0+n)$. The proof is therefore reduced to showing that the summation $\sum_{n=0}^{\infty}\xi_{x, \hat{x}}(n)$ is finite for all $x\neq \hat{x}$. By the ratio test, this holds if
\begin{equation*}
    \lim_{n\to \infty}\frac{\xi_{x, \hat{x}}(n+1)}{\xi_{x, \hat{x}}(n)} = M_{x,x}p_f \lim_{n^\prime\to\infty}\frac{\rho_{x, \hat{x}}(n^\prime + 1)}{\rho_{x, \hat{x}}(n^\prime)}< 1,
\end{equation*}
where $n^\prime = n + \delta_0$. Rearranging the terms gives
\begin{equation*}
    \lim_{n^\prime \to \infty} \frac{\rho_{x, \hat{x}}(n^\prime + 1)}{\rho_{x, \hat{x}}(n^\prime)} < \frac{1}{M_{x, x}p_f}, \quad x \neq \hat{x}.
\end{equation*}
Hence, $\mathcal{J}(\tilde{\pi}) < \infty$. By the optimality of $\pi_\lambda$, we have 
\begin{equation*}
    \mathcal{L}^\lambda \leq \mathcal{J}(\tilde{\pi}) + \lambda F(\tilde{\pi}) < \infty,
\end{equation*}
which establishes assertion (i).

\textit{Part~(ii):} Suppose, for contradiction, that $F^\lambda>F_{\max}$ for all $\lambda\geq 0$. Then, for every $\lambda$, 
\begin{equation}
    \mathcal{L}^\lambda = \mathcal{J}^\lambda + \lambda F^\lambda > \mathcal{J}^\lambda + \lambda F_{\max}. \label{eq:contradiction-1}
\end{equation}
On the other hand, we can construct a feasible policy $\hat{\pi}$ such that 
\begin{equation*}
    F(\hat{\pi}) = F_{\max} - \hat{\epsilon}, \quad \hat{\epsilon} \in (0, F_{\max}),
\end{equation*}
and the Lagrangian cost under $\hat{\pi}$ is
\begin{equation}
    \mathcal{L}^\lambda(\hat{\pi}) = \mathcal{J}(\hat{\pi}) + \lambda (F_{\max} - \hat{\epsilon}). \label{eq:contradiction-2}
\end{equation}
Comparing~\eqref{eq:contradiction-1} and~\eqref{eq:contradiction-2} gives that $\mathcal{L}^\lambda> \mathcal{L}^\lambda(\hat{\pi})$ for sufficiently large $\lambda$, which contradicts the optimality of $\pi_\lambda$. Hence, there exists some $\lambda\geq 0$ such that $F^\lambda\leq F_{\max}$. 

\renewcommand{\theequation}{B\arabic{equation}}
\setcounter{equation}{0} % reset equation counter

\section{Proof of Lemma~\ref{lemma:information-state}}\label{proof:lemma-information-state}
We show that $\{S_t\}$ is a controlled Markov process with control $\{U_t\}$~\cite{mahajan2016decentralized}. That is, the cost and the transition probability depend only on the current information state $S_t$ and the control action $U_t$, and not on the entire history $I_t$, i.e.,
\begin{align}
    \Pr[S_{t+1}|I_t, U_t] &= \Pr[S_{t+1}|S_t, U_t]\label{eq:markov-property},\\
    \mathbb{E}[c(X_t, \hat{X}_t, \Delta_t)|I_t, U_t] &= \mathbb{E}[c(X_t, \hat{X}_t, \Delta_t)|S_t, U_t].\label{eq:absorbing-property}
\end{align}
Since
\begin{align*}
     &\Pr[S_{t+1}|I_t, U_t] \notag\\
     =& \Pr[X_{t+1}, Z_t, \Theta_t, \Delta_t|X_{1:t}, Z_{1:t-1},\Theta_{1:t-1}, \Delta_{1:t-1}, U_{1:t}] \\
     =& \Pr[X_{t+1}|X_t] \times \Pr[Z_t, \Theta_t|X_t, Z_{t-1}, \Theta_{t-1}, U_t] \\
     &\times \Pr[\Delta_t|X_t, Z_t, \Theta_t, \Delta_{t-1}, U_t]
\end{align*}
depends on $I_t$ only through $S_t$, \eqref{eq:markov-property} is true. We now verify~\eqref{eq:absorbing-property}. Recall that $\hat{X}_t = g(Z_t, \Theta_t)$. Therefore, the expectation on the left-hand side of \eqref{eq:absorbing-property} satisfies
\begin{align*}
    &\mathbb{E}[c(X_t, \hat{X}_t, \Delta_t)|I_t, U_t]\\
    =& \sum_{z,\theta,\delta} \Pr[Z_t=z, \Theta_t=\theta|X_t, Z_{t-1}, \Theta_{t-1}, U_t] \\
    &\times \Pr[\Delta_t=\delta|X_t, Z_t = z, \Theta_t = \theta, \Delta_{t-1}, U_t]\times \rho_{X_t, \hat{X}_t}(\delta),
\end{align*}
which implies \eqref{eq:absorbing-property}.

\renewcommand{\theequation}{C\arabic{equation}}
\setcounter{equation}{0} % reset equation counter

\section{Proof of Theorem~\ref{theorem:lambda-optimal}}\label{proof:theorem-lambda-optimal}
Establishing the structural results in Theorem~\ref{theorem:lambda-optimal} requires the submodularity property of the Q-factor \cite[Lemma~4.7.1]{puterman1994markov}.

\begin{definition}
The Q-factor $Q(s, u)$ is submodular if
\begin{equation}
    Q(s^\prime, 1) + Q(s, 0) - Q(s^\prime, 0) - Q(s, 1)\leq 0, \,\, s \preceq s^\prime.\label{eq:submodularity}
\end{equation}
\end{definition}

\begin{lemma}\label{lemma:submodularity}
If $Q(s, u)$ is submodular, then the $\lambda$-optimal policy $\pi_\lambda(s) = \argmin_{u}Q(s, u)$ is monotone. 
\end{lemma}

The proof reduces to verifying~\eqref{eq:submodularity}. For any state $s = (x, z, \theta, \delta)$, the next state $s^\prime$ evolves as follows:

\textit{Case~1:} The sensor takes idle action ($u=0$). The source transitions to a new state $x^\prime$ with probability (w.p.) $M_{x,x^\prime}$ and the receiver's information ages by $1$. Then, in the current time slot, the system may (i) become synced, i.e.,
\begin{equation*}
    s^\prime = (x^\prime, z, \theta+1, 0)~\textrm{if}~\hat{x}=g(z,\theta+1)=x;
\end{equation*}
(ii) remain in the same error, i.e.,
\begin{equation*}
    s^\prime = (x^\prime, z, \theta+1,\delta+1)~\textrm{if}~\hat{x}=g(z,\theta+1)=g(z,\theta);
\end{equation*}
or (iii) jumps into a new error, i.e.,
\begin{equation*}
    s^\prime = (x^\prime,z, \theta+1, 1)~\textrm{if}~
    \hat{x} = g(z,\theta+1) \notin \{x, g(z,\theta)\}.
\end{equation*}
The Q-factor under idle action $(u=0)$ is then given by
\begin{align}
    &Q(s,0) = \mathds{1}_{\hat{x} = x} \sum_{x^\prime}M_{x,x^\prime} V(x^\prime, z, \theta+1, 0)\notag\\
    +& \mathds{1}_{\hat{x}=g(z,\theta)}
    \Big(\rho_{x,\hat{x}}(\delta+1) \hspace{-0.2em}+\hspace{-0.2em}
    \sum_{x^\prime}M_{x,x^\prime} V(x^\prime, z, \theta+1, \delta+1)\Big)\notag\\
    +& \mathds{1}_{\hat{x} \notin \{x, g(z,\theta)\}} \Big(
       \rho_{x,\hat{x}}(1) 
    \hspace{-0.2em}+\hspace{-0.2em} \sum_{x^\prime}Q_{x,x^\prime}V(x^\prime, z, \theta+1, 1)
    \Big).
    \label{eq:Q_0}
\end{align}

\textit{Case~2:} The sensor triggers transmission ($u=1$). If the transmission fails (w.p. $p_f$), the system evolves identically to the idle case, i.e., $Q(s,1) = \lambda + Q(s,0)$. If the transmission succeeds (w.p. $p_s$), the system becomes synced in the current time slot, i.e., $s^\prime = (x^\prime, x, 0, 0)$. The Q-factor under transmission is given by
\begin{equation}
    Q(s,1) \hspace{-0.2em}=\hspace{-0.2em} \lambda + p_f Q(s,0) + p_s \sum_{x^\prime}M_{x,x^\prime}V(x^\prime, x, 0, 0).\label{eq:Q_1}
\end{equation}
Substituting~\eqref{eq:Q_0} and~\eqref{eq:Q_1} into~\eqref{eq:submodularity} yields
\begin{align}
    &Q(s^\prime, 1) + Q(s, 0) - Q(s^\prime, 0) - Q(s, 1) \notag\\
    =& -p_s \big( Q(s^\prime, 0) - Q(s, 0) \big)\notag\\
    =& -p_s \mathds{1}_{\hat{x}=g(z,\theta)}
    \Big( \rho_{x,\hat{x}}(\delta^\prime+1) - \rho_{x,\hat{x}}(\delta+1) \notag\\
    &+ \sum\nolimits_{x^\prime}M_{x,x^\prime} V_\textrm{d}(\delta^\prime, \delta) \Big),
\end{align}
where $V_\textrm{d}(\delta^\prime, \delta) = V(x^\prime, z, \theta+1, \delta^\prime+1) - V(x^\prime, z, \theta+1, \delta+1)$. Since $\rho_{x,\hat{x}}$ is non-decreasing, we have $\rho_{x,\hat{x}}(\delta^\prime+1) \geq \rho_{x,\hat{x}}(\delta+1)$. Hence, it remains to show that the value function $V(s)$ is monotone (see Definition~\ref{eq:poset}). We establish this result in the following lemma, which completes the proof.

\begin{lemma}\label{lemma:monotonicity of V}
   $V(s)\leq V(s^\prime)$ for all $s \preceq s^\prime$.
\end{lemma}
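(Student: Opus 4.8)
The plan is to prove Lemma~\ref{lemma:monotonicity of V} via value iteration: I will show that the iterates $V^n(s)$ defined by \eqref{eq:Q-factor-1}--\eqref{eq:Q-factor-3} are all nondecreasing along the partial order $\preceq$, i.e.\ $V^n(s)\le V^n(s')$ for all $s\preceq s'$, and then pass to the limit using the convergence $V^n\to V$ guaranteed by \cite{puterman1994markov}. Since the reference-state subtraction in \eqref{eq:Q-factor-3} is a constant shift that does not affect monotonicity in $\delta$ (the pair $s\preceq s'$ shares the same $(x,z,\theta)$, so both get shifted by the same $\tilde V^n(s_\text{ref})$), it suffices to prove $\tilde V^n$ is nondecreasing in $\delta$, equivalently that $Q^n(s,u)$ is nondecreasing in $\delta$ for each fixed $u$ (the pointwise minimum over $u$ of functions nondecreasing in $\delta$ is again nondecreasing in $\delta$). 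The base case $V^0\equiv 0$ is trivially monotone.

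For the induction step, assume $V^{n-1}$ is nondecreasing in $\delta$ and examine $Q^n(s,u)=l(s,u)+\mathbb{E}[V^{n-1}(s')\mid s,u]$ using the explicit transition structure already worked out in \eqref{eq:Q_0}--\eqref{eq:Q_1}. Fix $(x,z,\theta)$ with $x\ne g(z,\theta)$ and compare $\delta\le\delta'$. Under $u=0$: if $\hat x=g(z,\theta+1)=x$, then $Q^n(s,0)=\sum_{x'}Q_{x,x'}V^{n-1}(x',z,\theta+1,0)$ is independent of $\delta$; if $\hat x=g(z,\theta+1)\notin\{x,g(z,\theta)\}$, then $Q^n(s,0)$ depends on $\delta$ only through the term with $\rho(1)$ and $V^{n-1}(\cdot,1)$, again independent of $\delta$; the only nontrivial case is $\hat x=g(z,\theta+1)=g(z,\theta)$, where the $\delta$-dependence enters through $d(x,\hat x)\rho(\delta+1)+\sum_{x'}Q_{x,x'}V^{n-1}(x',z,\theta+1,\delta+1)$, which is nondecreasing in $\delta$ because $\rho$ is nondecreasing and $V^{n-1}$ is nondecreasing in its last argument by the induction hypothesis. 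Hence $Q^n(s,0)$ is nondecreasing in $\delta$. Under $u=1$, \eqref{eq:Q_1} gives $Q^n(s,1)=\lambda+p_fQ^n(s,0)+p_s\sum_{x'}Q_{x,x'}V^{n-1}(x',x,0,0)$; the second term inherits monotonicity from the $u=0$ case and the third term is independent of $\delta$, so $Q^n(s,1)$ is nondecreasing in $\delta$ as well. Taking the minimum over $u$ and subtracting the constant $\tilde V^n(s_\text{ref})$ preserves this, completing the induction.

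The one subtlety — and the step I expect to need the most care — is handling the case $x=g(z,\theta)$, i.e.\ states that are currently \emph{synced}, for which $\delta=0$ is forced and the partial order $\preceq$ only relates such a state to itself; there monotonicity is vacuous, but I must make sure the successor states $s'$ appearing inside the expectations in \eqref{eq:Q_0}--\eqref{eq:Q_1} are always legitimate elements of $\mathbb{S}$ with consistently defined $\delta$-coordinates, so that the induction hypothesis applies to them. Concretely, the successors of an erroneous state $s=(x,z,\theta,\delta)$ are of the form $(x',z,\theta+1,0)$, $(x',z,\theta+1,\delta+1)$, $(x',z,\theta+1,1)$, or $(x',x,0,0)$, and comparing the successors of $s\preceq s'$ one finds they either coincide or differ only in the last coordinate in the direction $\delta+1\le\delta'+1$ — exactly the relation under which the induction hypothesis delivers the needed inequality. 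Once this bookkeeping is in place the argument closes; passing $n\to\infty$ yields $V(s)\le V(s')$ for all $s\preceq s'$, which is the claim and, combined with the submodularity computation preceding the lemma, completes the proof of Theorem~\ref{theorem:lambda-optimal}.
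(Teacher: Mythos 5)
Your proposal is correct and follows essentially the same route as the paper's own proof: induction over the value-iteration iterates \eqref{eq:Q-factor-1}--\eqref{eq:Q-factor-3}, showing $Q^{n}(s,u)$ is nondecreasing in $\delta$ for $u=0$ via the monotonicity of $\rho$ and the induction hypothesis, reducing the $u=1$ case to the $u=0$ case through the factor $p_f$, and noting that the minimum over $u$ and the common reference-state subtraction preserve the ordering before passing to the limit. Your explicit three-case bookkeeping under $u=0$ and the check that successor states are compared under the same partial order are just a more detailed rendering of the same argument.
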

\begin{IEEEproof}
    We prove the result by induction. Choose $V^0(s)$ to be non-decreasing in $\delta$, i.e., $V^0(s)\leq V^0(s^\prime)$ for all $s \preceq s^\prime$. Assume that $V^{n}(s) < V^n(s^\prime)$ for all $s \preceq s^\prime$. We will show that the monotonicity holds for $n+1$. From~\eqref{eq:Q-factor}, we have
\begin{equation*}
    V^{n+1}(s) = \min_{u}\big[Q^{n+1}(s,u)\big] - \min_{u}\big[Q^{n+1}(s_\textrm{ref},u)\big]. 
\end{equation*}

We first show that $Q^{n+1}(s, u)$ is monotone for fixed $u$. When $u=0$, the Q-factor at iteration $n+1$ satisfies
\begin{align}
     & Q^{n+1}(s^\prime,0) - Q^{n+1}(s,0) \notag\\
    =~&\rho_{x,\hat{x}}(\delta^\prime+1) - \rho_{x,\hat{x}}(\delta+1) +\sum\nolimits_{x^\prime}M_{x,x^\prime} V^n_\textrm{d}(\delta^\prime, \delta) 
    \geq 0.\label{eq:prove-inequality-1}
\end{align}
The above inequality holds by the induction hypothesis.

When $u=1$, we may write
\begin{align}
      &Q^{n+1}(s^\prime,1)  - Q^{n+1}(s,1) \notag\\
    =~& p_f(Q^{n+1}(s^\prime,0) - Q^{n+1}(s,0)) \geq 0.\label{eq:prove-inequality-2}
\end{align}
Combining~\eqref{eq:prove-inequality-1} and~\eqref{eq:prove-inequality-2} gives that $Q^{n+1}(s, u)$ is monotone for fixed $u$. It follows that
\begin{align*}
    V^{n+1}(s) &= \min_{u}\big[Q^{n+1}(s,u)\big] - \min_{u}\big[Q^{n+1}(s_\textrm{ref},u)\big] \notag\\
    & \leq \min_{u}\big[Q^{n+1}(s^\prime,u)\big] - \min_{u}\big[Q^{n+1}(s_\textrm{ref},u)\big]\notag\\
    &= V^{n+1}(s^\prime),
\end{align*}
which completes the proof.
\end{IEEEproof}

\renewcommand{\theequation}{D\arabic{equation}}
\setcounter{equation}{0} % reset equation counter

\section{Proof of Corollary~\ref{coro:symmetric-source} and Corollary~\ref{coro:special-case}}\label{proof:coro symmetric source}
\textit{Proof of Corollary~\ref{coro:symmetric-source}:} Fix any transition matrix $M$ of the form~\eqref{eq:special-source}. We will show that the MAP estimate coincides with the ZOH estimate, i.e., $\hat{X}_t = g(Z_t, \Theta_t) = Z_t$ for all $\Theta_t \geq 1$. This is equivalent to showing that $M^n_{i,i} \geq M^n_{i,j}$ for all $i\neq j$ and all $n\geq 1$. 

This form of transition matrix can be written as
\begin{equation}
    M = (1-|\mathbb{X}|\sigma)\mathbf{I} + \sigma \mathbf{1}\mathbf{1}^\top,
\end{equation}
where $\mathbf{I}$ is the identity matrix and $\mathbf{1}$ is the all-ones column vector. Using the binomial expansion, we have
\begin{align}
    M^n &= \big[(1-|\mathbb{X}|\sigma)\mathbf{I} + \sigma \mathbf{1}\mathbf{1}^\top\big]^n\notag\\
    &= \sum_{k=0}^n\binom{n}{k}\big((1-|\mathbb{X}|\sigma)\mathbf{I}\big)^{n-k}\big(\sigma \mathbf{1}\mathbf{1}^\top \big)^k.
\end{align}
Noting that 
\begin{equation*}
    \big(\sigma \mathbf{1}\mathbf{1}^\top \big)^k = \begin{cases}
        \mathbf{I}, & k= 0,\\
        \sigma^k |\mathbb{X}|^{k-1} \mathbf{1}\mathbf{1}^\top, &k\geq 1,
    \end{cases}
\end{equation*}
we obtain
\begin{align}
    M^n 
    &= (1-|\mathbb{X}|\sigma)^n \mathbf{I} + \frac{\mathbf{1}\mathbf{1}^\top}{|\mathbb{X}|} \sum_{k=1}^n\binom{n}{k} (1-|\mathbb{X}|\sigma)^{n-k} (|\mathbb{X}|\sigma)^k \notag\\
    &= (1-|\mathbb{X}|\sigma)^n \mathbf{I} + \frac{1 - (1-|\mathbb{X}|\sigma)^n}{|\mathbb{X}|}\mathbf{1}\mathbf{1}^\top. 
\end{align}
Since $\sigma \leq 1/|\mathbb{X}|$, it follows that
\begin{equation*}
    M^n_{i,i} - M^n_{i,j} = (1-|\mathbb{X}|\sigma)^n \geq 0, \quad j\neq i.
\end{equation*}
This establishes Corollary~\ref{coro:symmetric-source}.

\textit{Proof of Corollary~\ref{coro:special-case}:} The cost function becomes $c(S_t) = \rho(\Delta_t)$ and $S_t = \Delta_{t-1}$ satisfies the conditions~\eqref{eq:markov-property},~\eqref{eq:absorbing-property} in Appendix~\ref{proof:lemma-information-state}. By Corollary~\ref{coro:symmetric-source}, the thresholds are independent of both AoI and the estimation error.

\bibliographystyle{IEEEtran}
\bibliography{ref}

\end{document}